\newcommand{\by}{\bar{y}}
\newcommand{\bx}{\bar{x}}
\newcommand{\pr}{\mathbf{Pr}}
\newcommand{\eps}{{\epsilon}}
\newcommand{\etal}{{\it et al.}}
\newcommand{\veps}{\varepsilon}
\begin{document}
 
\title{Sampling in Space Restricted Settings}
%
%\titlerunning{}  
% abbreviated title (for running head)
%                                     also used for the TOC unless
%                                     \toctitle is used
%
\author{Anup Bhattacharya \and Davis Issac \and Ragesh Jaiswal \and Amit Kumar}
%
%\authorrunning{Jaiswal et al.} 
% abbreviated author list (for running head)
%
%%%% list of authors for the TOC (use if author list has to be modified)
%\tocauthor{Ragesh Jaiswal}
%
\institute{Department of Computer Science and Engineering, \\
Indian Institute of Technology Delhi.\thanks{Email addresses: \email{\{csz128275, davis, rjaiswal, amitk\}@cse.iitd.ac.in}} 
}

\maketitle     % typeset the title of the contribution

\begin{abstract}
Space efficient algorithms play a central role in dealing with large amount of data. In such settings, one would like to analyse the large data using small amount of ``working space''. One of the key steps in many algorithms for analysing large data is to maintain a (or a small number) random sample from the data points. 
In this paper, we consider two space restricted settings -- (i) streaming model, where data arrives over time and one can use only a small amount of storage, and (ii) query model, where we can structure the data in low space and answer sampling queries.
In this paper, we prove the following results in above two settings:
\begin{itemize}
\item In the streaming setting, we would like to maintain a random sample from the elements seen so far. 
%All known strategies require huge amount of independent random bits, in some cases as many as the number of elements in the data.
We prove that one can maintain a random sample using $O(\log n)$ random bits and $O(\log n)$ space, where $n$ is the number of elements seen so far. We can extend this to the case when elements have weights as well. 
\item In the query model, there are $n$ elements with weights $w_1, \ldots, w_n$ (which are $w$-bit integers) and one would like to sample a random element with probability proportional to its weight. 
Bringmann and Larsen (STOC 2013) showed how to sample such an element using $nw +1 $ space (whereas, the information theoretic lower bound is $n w$). We consider the approximate sampling problem, where we are given an error parameter $\veps$, and the sampling probability of an element can be off by an $\veps$ factor. We give matching upper and lower bounds for this problem. 
\end{itemize}

\end{abstract}

\section{Introduction}

Space efficient algorithms are important when data is too large and cannot be stored in the working memory.
Such algorithms have become important with the increasing popularity of mobile devices.
These devices, in many cases have small amount of working memory.
Also, there is an increasing need to process the huge amount of data being generated over the internet for purposes of data mining.
In such scenarios, there is a need for analyzing the data in a {\em streaming} fashion.
This is popularly known as the streaming setting.
Note that in all these cases, the other resources such as the running time and amount of randomness\footnote{Note that typically a pseudorandom generator is used for generating random bits.} are equally important because these determine the power required for processing the data.
With the size of computing devices becoming smaller, power is becoming the most important resource to optimize in all such space-restricted settings.

\noindent
In this work, we look at the basic problem of random sampling.
The problem is very simple.
Given $n$ objects, the goal is to sample an object (or a few objects) uniformly at random.
This is called uniform sampling.
We can also consider non-uniform sampling where the objects come along with some weights and the goal is to sample an object with probability proportional to its weight.
We discuss these sampling problems in two space-restricted settings.
The first setting is the streaming setting, where the data items are available as a stream (i.e., the $i^{\textrm{th}}$ data item is available at time $i$) and one does not a priori know the number of data items that one should expect to see.
In such cases, maintaining a random sample at all times is more challenging than sampling in the classical setting where all data items are present in the memory.
This is partly because we cannot store all the data items in the stream due to space-restrictions.
The second space-restricted setting that we discuss is the query model.
Here, we talk about non-uniform sampling with respect to a given distribution.
In this model, one is allowed to pre-process the data and store a representation in small space so as to be able to quickly answer sampling queries.
Next, we discuss our results in the above two settings.

\subsection{Sampling in the streaming setting}
In this setting, the data items are available as a stream.
That is, the $i^{\textrm{th}}$ data item can be assumed to arrive at time $i$.
Here, we are interested in maintaining a uniformly random sample at all times.
We will generalise this for non-uniform sampling.
We would like the sampling algorithm to be one-pass and it should save only one data item (since each data item could be very large -- files/packets) in its working memory.

\noindent
The most basic method of doing this is called {\em reservoir sampling} and it proceeds in the following manner:
Let the items be denoted by $O_1, ...$ and the storage location used to store one item in the stream be denoted by $S$.
Store the first object $O_1$ in $S$.
Subsequently, for $O_i$, replace the previously stored item in $S$ with $O_i$ with probability $\frac{1}{i}$ and continue without changing $S$ with the remaining probability.
Whenever a sample is required, output the object stored in $S$.
Suppose $n$ objects have been seen until the time when the sample was produced.
The probability that $S$ stores $O_i$ is $\frac{1}{i} \cdot \frac{i}{i+1} \cdot ... \cdot \frac{n-1}{n} = \frac{1}{n}$.
So, we sample with the desired uniform distribution.
However, the amount of randomness required in this procedure is large.
Let us try to estimate the number of random bits required in this sampling procedure.
After $O_i$ arrives, the procedure will need random bits to decide whether the item stored in $S$ needs to be replaced with $O_i$.
This should happen with probability exactly $\frac{1}{i}$.
So, at least $\log{i}$ random bits will be required for this\footnote{Actually, more random bits might be required since $i$ might not be a power of $2$ and hence we might need to do {\em rejection sampling}. We can say that $O(\log{i})$ random bits are needed in expectation.}.
So, the number of random bits required for this procedure is at least $\sum_{i=1}^{n} \log{i} = \Omega(n \log{n})$ in expectation\footnote{We will discuss a more advanced sampling technique by Vitter that requires $\Omega((\log{n})^2)$ random bits in expectation.}.
This should be contrasted with the amount of random bits needed to uniformly sample in the classical setting where all the items are present in the memory.
%Suppose for the sake of simplicity of discussion that $n$ is a power of $2$.
In this case, in the classical setting where all the $n$ objects are present in the memory, we will need just $\log{n}$ random bits in expectation. 
%which is the minimum number of random bits required for sampling.
In this work, we address the gap in the amount of randomness required in the streaming versus the classical settings.
We will consider a model in which strict bounds on randomness may be defined as opposed to comparing expected amount of randomness required in classical and streaming settings.
We will first formalise the problem and then show that as far as amount of randomness is concerned, there is no gap in the streaming and the classical settings.
That is, even in the streaming setting, uniform sampling may be done using minimum number of random bits required in the classical setting.

\noindent
First, we note that upper bounds on randomness cannot be defined with respect to perfectly uniform sampling.
To see this, let us assume that $n>2$ is a prime number.
For the sake of contradiction, assume that a uniform sample can be generated using $r$ random bits for some finite $r$.
This means that there is a function $f : \{0, 1\}^r \to [n]$ such that for all $i, j \in [n]$, $|\{x | f(x) = i\}| = |\{x | f(x) = j\}|$.
This means that $2^r$ is divisible by $n$.
This is a contradiction since $n$ is a prime number.
One natural way of formalising the question of randomness efficiency with respect to uniform sampling is to allow the sampling algorithm to return a null answer (denoted by $\bot$) with certain small probability $\veps$.
This means that the sampling algorithm is allowed not to output any member of the set $\{1, ..., n\}$ with probability at most $\veps$.
Let us call this {\em uniform sampling with $\veps$-error}.
We can easily argue (see Section~\ref{sec:2}) that $\Omega(\log{\frac{n}{\veps}})$ random bits are required for uniform sampling with $\veps$-error.
Following is a simple algorithm that does uniform sampling with $\veps$-error using $O(\log{\frac{n}{\veps}})$-bits of randomness:
With the error parameter $\veps$ fixed, we first compute the smallest integer $r$ such that
$\lfloor 2^r/n\rfloor > 1 \quad$ and $2^r\ (mod\ n) \leq \veps \cdot 2^r$.
Let $k = \lfloor 2^r/n \rfloor$.
Consider a function $f$ that maps the first $k$ $r$-bit strings (ordered lexicographically) to $1$, the next $k$ strings to $2$ and so on.
The last $2^r\ (mod\ n)$ strings are mapped to $\bot$.
The sampling algorithm computes the function $f$ on the $r$ random bits and outputs the value of the function.

\vspace{0.1in}

\noindent
{\bf Our contributions}: 
Can the sampling ideas of the classical setting be extended to the streaming setting?
The answer is negative.
The main bottleneck in the streaming setting is that the value of $n$ is not known in advance
whereas the sampling algorithm in the streaming setting must maintain a sample at all times.
%One solution is reservoir sampling where fresh random bits are used after every new item arrives.
%However, as we have seen, this is costly in terms of the amount of randomness used.
In this work, we give a sampling algorithm that uses $O(\log{\frac{n}{\veps}})$-bits of randomness, uses $O(\log{\frac{n}{\veps}})$-space, and has a running time of $O(n + \log{\frac{n}{\veps}})$.
Moreover, the running time for processing each item is a constant except for the first item which is $O(\log{\frac{1}{\veps}})$.
\footnote{The running time is in terms of the number of arithmetic operations. If we take into account the number of bit-operations, then these bounds are larger by a multiplicative factor of $O\left((\log{\frac{n}{\veps}})^2\right)$.}
We also extend these results for non-uniform sampling.
Section~\ref{sec:2} gives details of these results.
It is important to point out that the lower bound on the number of random bits remains the same if the sampling algorithm is allowed to store more than one item from the stream.
Our sampling algorithm matches this lower bound while storing only one item from the stream.

\vspace{0.1in}

\noindent
{\bf Related work}: 
The initial techniques for sampling and reservoir sampling were discussed in \cite{knuth,vitter84,vitter85}.
Vitter's~\cite{vitter85} work was one of the early works on sampling in the streaming setting where the author was interested in sampling records that were stored in a magnetic tape by making a pass over the tape.
However, the computational resource that the author was interested in optimising was the running time of the sampling algorithm and not the amount of randomness or the space.
In fact, the author assumed that one can sample random numbers of arbitrary precision in the interval $[0, 1]$ in constant time.
Li~\cite{li} gave quantitative improvement over Vitter's work, again in terms of the running time bounds.
Park \etal~\cite{park} extended these ideas for sampling with replacement whereas Efraimidis and Spirakis~\cite{es06} did the same for weighted sampling.
Babcock \etal~\cite{bdm02} gave sampling algorithms where the sample is required to be among the most recent items seen in the stream.
They maintain a random sample over a moving window of the most recent items in the stream.

\vspace{0.1in}

\noindent
{\bf Comparison with Vitter's Reservoir Sampling}: 
Vitter's work on reservoir sampling~\cite{vitter85} is the most relevant previous work on this topic.
So, it is important to compare our results with those in~\cite{vitter85}.
We have already seen the most elementary reservoir sampling technique where the $i^{\textrm{th}}$ item is stored with probability $1/i$.
The expected number of random bits required for this is $O(\log{i})$ and so the expected number of random bits required for the overall algorithm is $O(n \log{n})$.
Note that this basic technique accesses fresh random bits for every item of the stream.
A somewhat  more sophisticated technique in~\cite{vitter85} reduces the number of times random bits are accessed by the sampling algorithm.
This technique works as follows: Suppose at time instance $i$, we have the $i^{th}$ item stored as the sample in the storage space $S$.
At this time, a positive integer $s$ is chosen from a particular probability distribution $f_i:\mathbb{Z} \to [0, 1]$.
This number $s$ denotes the number of stream items that the algorithm will skip before saving the item $(i+s+1)$.
This probability distribution is defined as $f_i(s) = \frac{i}{(i+s)(i+s+1)}$.
So randomness is required only for picking these ``skips".
It was shown in the paper that the expected number of times such skips need to be picked is $O(\log{n})$.
In order to sample from the distribution $f_i$, the paper assumes that one can uniformly sample a real number $u$ of arbitrary precision from $[0, 1]$.
One simple idea is to consider the cumulative distribution $F_i(s) = \sum_{i \leq s} f_i(s)$ and then pick the smallest value $s$ such that $F_i(s) \geq u$.

\noindent
Before further discussion regarding Vitter's work, let us draw a comparison between the models considered by our work and that in ~\cite{vitter85}.
First, in our model, randomness is consumed only in terms of random bits.
The reservoir sampling described above uses uniform random samples of arbitrary precision from $[0,1]$.
The second difference one should note is that both basic reservoir sampling and the one described above gives guarantees in terms of expected value of the randomness used. In our model, we are interested in the worst case number of random bits used given that the sampling algorithm is allowed to make some error.
So, in some sense, one may interpret our algorithm as a Monte Carlo algorithm and Vitter's reservoir sampling algorithm as a Las Vegas algorithm.

\noindent
In order to compare our results more closely, we need to remove the requirement of uniform samples from $[0, 1]$ in Vitter's algorithm.
So, the next question we address is whether one can sample from the distribution  $f_i$ using few random bits instead of uniform samples in $[0, 1]$.
Let us try to design an algorithm that sample $s$ from the distribution $f_i$ such that the expected number of random bits used by the algorithm is small.
Towards this, we first note that
$
\pr[s > i]  = 1 - \sum_{j \leq i}f_i(j)  = 1 - \sum_{j \leq i} \left( \frac{i}{i + j} - \frac{i}{i+j+1}\right) \leq 1/2.
$
We now consider the problem as sampling from the set $\{0, 1, ..., i+1\}$ as per a distribution $\mathcal{D}$, where $\forall j \leq i, \mathcal{D}(j) = f_i(j)$ and $\mathcal{D}(i+1) = \pr[s > i]$.
With respect to sampling $s$ from $f_i$, the $(i+1)^{th}$ item in the above problem corresponds to the case when $s > i$ and in this case we will draw a conditional sample from $\{i+1, i+2, ...\}$.
Note that if we can show that the expected number of random bits used in the above problem of sampling from $\{0, 1, ..., i+1\}$ is $R$, then the expected number of random bits required for sampling $s$ using $f_i$ will be $O(R)$.
So, let us just focus on the sampling problem above. For this we use the technique of Bringmann and Larsen~\cite{bl} (see section 2.1).
We will construct an array $A$ that contains numbers in $\{0, 1, ..., i+1\}$. 
$A$ contains the number $j$ exactly $\lfloor (i+2) \cdot \mathcal{D}(j)\rfloor + 1$ times. The sampling algorithm is as follows:
\begin{quote}
1. Pick a uniformly random $k \in \{1, ..., |A|\}$.\\
2. If $k = 1$ or $A[k] \neq A[k-1]$, then with probability $(1 - frac((i+2) \cdot \mathcal{D}(A[k])))$ go to step 1.\\
3. return $A[k]$.
\end{quote}

\noindent
Here $frac(x) = x - \lfloor x\rfloor$.
Bringmann and Larsen~\cite{bl} show that  the above sampling procedure returns a sample as per distribution $\mathcal{D}$ in constant 
expected time. 
Let us estimate the randomness required by this sampling procedure.
Note that $|A| \leq 2(i+2)$ and so step 1 costs $O(\log{i})$ random bits.
Also, since $\mathcal{D}(j) = \frac{i}{(i+j)(i+j+1)}$, the cost for simulating step 2 is $O(\log{i})$ random bits.
So, the expected number of random bits required in this sampling procedure is $O(\log{i})$.
As per our discussion earlier, this means that the expected number of bits required to sample $s$ from the distribution $f_i$ is $O(\log{i})$.
This further means that the expected number of bits required for Vitter's reservoir sampling algorithm is $O(\log^2{n})$.

\noindent
In the classical model where all the items are in the memory, the expected number of random bits required to sample is $O(\log{n})$.
So, within the model considered by Vitter's algorithm where one is interested in the expected number of random bits, there is a gap between the bounds in the classical and the streaming settings. An interesting question is whether this gap should exist.
Recall, that in our model where we are interested in the number of random bits when the sampling algorithm is allowed to err with small probability, we show there is no such gap between the classical and streaming setting.

\subsection{Succinct sampling}

The second space-restricted setting that we consider is a non-streaming setting where the set of elements are integers $\{1, ..., n\}$.
The most natural model of sampling in the non-streaming setting is the {\em query model}.
This is the model used by Bringmann and Larsen~\cite{bl} in their work.
Our work within this model may be interpreted as a natural extension of their work.
The inputs are $w$-bit integers $x_1, ..., x_n$.
The model includes a pre-processing step where appropriate data structures may be created.
Queries for producing a sample as per the weighted distribution are made and should be processed quickly using the data structures created in the pre-processing step.
The weighted distribution means that the query algorithm should output $i$ with probability $\frac{x_i}{\sum_j x_j}$.

\noindent
Bringmann and Larsen~\cite{bl} observed that the classical Walker's alias method~\cite{walker} in the word RAM model (here unit operations may be performed on words of size $w$ bits) has a pre-processing algorithm that runs in time $O(n)$, answers a sampling query in $O(1)$ expected time, and uses a storage of size $n (w + 2 \log{n} + o(1))$ bits.
In order to analyse the space usage, they defined a {\em systematic case} where the input is read-only and a {\em non-systematic case} where the input representation may be changed to reduce the total space.
The {\em redundancy} of a solution is the number of bits used in addition to the information-theoretic minimum required for storing the input.
Given this, the Walker's alias solution has a redundancy of $(2 n \log{n} + o(n))$-bits.
Bringmann and Larsen~\cite{bl} improved this and gave a solution in the systematic case where the preprocessing time is $O(n)$, expected query time is $O(1)$, and the redundancy is $n + O(w)$.
They also gave a solution that has $1$ bit of redundancy in the non-systematic case.
Furthermore, they showed optimality of their results.
However, all their results are for {\em exact} sampling.
In our work, we extend their work to {\em approximate} sampling in the word RAM model.

\noindent
In many realistic scenarios, we might not be required to sample exactly according to the weighted distribution $x_1, ..., x_n$.
One such scenario is the sampling based algorithms for $k$-means clustering such as the PTAS by Jaiswal \etal~\cite{jks12} where the algorithms are robust against small errors in sampling probability.
This indeed was the starting point of this work.
It may be sufficient to sample from a distribution such that the sampling probabilities are {\em close} to the exact sampling probabilities defined by the weights $x_1, ..., x_n$.
We will consider two models for closeness.
First is the {\em additive model} where the $i^{\textrm{th}}$ item's sampling probability may be between $\left(x_i/(\sum_j x_j) - \veps \right)$ and $\left( x_i/(\sum_j x_j) + \veps \right)$ for some small $\veps$.
Second is the {\em multiplicative model} where the $i^{\textrm{th}}$ item's sampling probability may be between $(1-\veps) \cdot \left(x_i/(\sum_j x_j) \right)$ and $(1+\veps) \cdot \left( x_i/(\sum_j x_j)\right)$ for some small $\veps$.
%We discuss these in the next few subsections.

\noindent
Before we state our results for approximate sampling, we should first understand the differences between exact and approximate sampling in terms of space usage.
Note that the information theoretic lower bound on the amount of space required to do exact weighted sampling given $n$ $w$-bit integers as input is $nw$.
\footnote{This is not a trivial observation since $x_1, ..., x_n$ and $x_1/2, ..., x_n/2$ give the same weighted distribution. See Lemma 5.1 in \cite{bl}.}\
However, in case of approximate sampling, the information theoretic bounds can be much lower since we can use some lossy representation of the inputs that does not effect the sampling probabilities too much but saves much space.
Given this, the non-systematic case (where data is not read-only and may be re-structured) seems more relevant than the systematic case (where the inputs are read-only and have to be retained).
So, in our work we discuss only the non-systematic case for approximate sampling.
Note that all the algorithms that we study have optimal pre-processing time of $O(n)$ and optimal query time of $O(1)$.

\vspace{0.1in}

\noindent
{\bf Our contributions}: We show that in the multiplicative model, the lower bound on the space requirement is $\Omega(n\log{w} + n \log{\frac{1}{\veps}})$.
We design a sampling algorithm and show that the space usage of our algorithm matches this lower bound.
In the additive model, we give similar results.
However, in this case our algorithms match the lower bound only when $\veps$ is a constant independent of $n$.

\vspace{0.1in}

\noindent
{\bf Related work}: 
Walker~\cite{walker} gave a solution for exact sampling in the classical setting.
Kronmal and Peterson~\cite{kp1979} improved the preprocessing time of Walker's method.
Bringmann and Panagiotou~\cite{bp2012} studied variants of sampling from discrete distribution problems.
All the above mentioned works used Real RAM model of computation.
Bringmann and Larsen \cite{bl} analysed Walker's alias method in Word RAM model of computation and also gave better bounds for exact sampling from discrete distribution problems.
Their work is most relevant to our current work on succinct sampling and our results may be regarded as a natural extension to \cite{bl}.

\newcommand{\prob}{{\mbox {\bf Pr}}}
\newcommand{\bits}{\{0,1\}}
\renewcommand{\sp}{{\hspace*{1 mm}}}
\newcommand{\append}{{\tt append}}

\section{Sampling in the Streaming Setting}\label{sec:2}

The input consists of a stream of distinct objects $O_1, O_2,...$, where the object $O_i$ can be thought of as arriving at time $i$. At any point of time, we would like to maintain a random sample from the set of objects seen so far. 
More formally, we would like to maintain a random variable $X_t$ for all time $t$ such that
$\prob[X_t = O_i]$ is the same for all $i=1, \ldots, t$. 
As mentioned in the introduction, this property cannot be achieved for all values of $t$. Therefore, the input also specifies a parameter $\veps$ -- the algorithm
is allowed to output a null object $\bot$ with probability at most $\veps$. 
Therefore, we want the following property to hold for all time $t$:
$$ \prob[X_t = \bot] \leq \veps, \prob[X_t=O_1] = \prob[X_t = O_2] = \cdots = \prob[X_t = O_t]. $$ 
We shall call such a sequence $X_t$ of random variables  {\em  uniform samples} (with error parameter $\veps$, which will be implicit in the discussion).

\noindent
In the setting of streaming algorithms, we would like to limit the space available to the algorithm. 
We allow the algorithm to store only one object at any point of time (besides some local variables) -- this is motivated by the fact that each object may be quite large (objects could be large files/packets etc.), and so it may not be feasible to store too many objects in the local memory of the program. 
%The generalisation to the case when the algorithm can store up to $k$ objects at any time follows in a straightforward manner.

\noindent
%We first discuss the standard technique, called {\em reservoir sampling}, of maintaining a uniform sample.
%As discussed in introduction, this method however needs $\Omega(n \log n)$ random bits (till time $n$). 
%This should be contrasted with 
Consider the amount of random bits needed to uniformly sample in the classical setting where all the $n$ items are present in the memory and we need a random sample among these items.
It is not difficult to show that $O \left(\log \frac{n}{\veps} \right)$ bits of randomness suffice (w.r.t. uniform sampling with $\veps$-error). 
In fact, it is also fairly easy to show that any algorithm (even in the
non-streaming setting) needs at least these many random bits. 
We give details of 
%reservoir sampling and 
the lower bound on number of random bits in Section~\ref{sec:back}. 
In Section~\ref{sec:result}, we show that 
%one can in fact do much better than reservoir sampling. 
we can maintain an exact sample with only $O \left(\log \frac{n}{\veps} \right)$ bits of randomness  (till time $n$). 
In Section~\ref{sec:wtd}, we extend this result to the weighted case.
%, and in Section~\ref{sec:k}, we argue how our results can be extended to the case when the streaming algorithm can store up to $k$ objects.

\subsection{Background}
\label{sec:back}

%We describe the idea of the reservoir sampling algorithm. 
%For simplicity,  assume that for any positive integer $i,$ we can generate a uniform random sample from the set $\{1, \ldots, i\}$ using $O(\log i)$ bits of randomness.
%At time $t$, suppose we have a uniform sample $X_t$. 
%When the object $O_{t+1}$ arrives at time $t+1$, we set $X_{t+1} = O_{t+1}$ with probability $\frac{1}{t+1}$, otherwise $X_{t+1}$ remains equal to $X_t$. 
%As shown in the introduction, the sequence $X_t$ form uniform samples. 
%Also the total number of bits of randomness needed till time $n$ is $\Omega(n \log n)$.
We consider the off-line problem of generating a uniform sample with error parameter $\veps$ from the set of objects $O_1, \ldots, O_n$. The proof of the next lemma is given in the Appendix.

\begin{lemma}
\label{lem:offline}
We can generate a uniform sample with error parameter $\veps$  from a set of $n$ distinct objects using $O(\log \frac{n}{\veps})$ random bits. 
Further, any algorithm for generating such a sample must use $\Omega(\log \frac{n}{\veps})$ random bits.
\end{lemma}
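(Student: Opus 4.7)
My plan is to prove both directions of Lemma~\ref{lem:offline} by formalizing the construction already sketched in the introduction and then exhibiting a matching lower bound via a number-theoretic choice of $n$. For the upper bound, given $n$ and $\veps$, I would choose $r := \lceil \log_2(n/\veps) \rceil$ so that $2^r \geq n/\veps$; then $2^r \bmod n < n \leq \veps \cdot 2^r$, so the condition of the construction given in the introduction is satisfied. Setting $k := \lfloor 2^r/n \rfloor \geq 1$, I partition the $2^r$ binary strings of length $r$ into $n$ consecutive blocks of size $k$ (the $i^{\mathrm{th}}$ block mapped to $O_i$) followed by a residual block of size $2^r - nk = 2^r \bmod n$ mapped to $\bot$, defining a function $f:\{0,1\}^r \to \{O_1,\ldots,O_n,\bot\}$. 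Drawing an $r$-bit uniform string $x$ and outputting $f(x)$ then yields $\prob[X = O_i] = k/2^r$ for every $i$ and $\prob[X = \bot] = (2^r \bmod n)/2^r \leq \veps$, with total randomness $r = O(\log(n/\veps))$.

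For the lower bound, I would fix $n > 2$ to be prime and argue as follows. Any algorithm using $r$ uniform random bits is modelled by a function $f : \{0,1\}^r \to \{O_1,\ldots,O_n,\bot\}$ that assigns to each random string the produced output (if the algorithm has further auxiliary randomness, a standard averaging argument fixes it to its best setting). Uniformity forces $|f^{-1}(O_i)|$ to equal a common value $k$, and the error constraint forces $|f^{-1}(\bot)| \leq \veps \cdot 2^r$, so $2^r = nk + (2^r \bmod n)$ with $2^r \bmod n \leq \veps \cdot 2^r$. Since $n$ is odd, $\gcd(2^r,n)=1$ and hence $2^r \bmod n \geq 1$, which combined with the error bound gives $\veps \cdot 2^r \geq 1$, i.e., $r \geq \log_2(1/\veps)$. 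Non-triviality ($k \geq 1$) also forces $2^r \geq n$, so $r \geq \log_2 n$. Therefore $r \geq \max(\log_2 n,\log_2(1/\veps)) \geq \tfrac{1}{2}\log_2(n/\veps) = \Omega(\log(n/\veps))$.

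The main subtlety I anticipate is the scope of the lower bound: for highly composite $n$ such as $n = 2^k$, exact uniform sampling is possible with only $\log_2 n$ random bits and no $\veps$-dependence is needed. The $\Omega(\log(1/\veps))$ term therefore arises only when $n$ has no small power-of-two factors, which is why I restrict the hard instance to prime $n$; this suffices to establish the claimed $\Omega(\log(n/\veps))$ bound for infinitely many $n$, matching the informal statement of the lemma. Beyond this modelling point, the argument is just elementary counting combined with the fact that $\gcd(2^r,n)=1$ for odd $n$.
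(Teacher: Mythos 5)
Your proposal is correct and mirrors the paper's proof: the upper bound is the same block-partition construction with $r$ the smallest integer satisfying $2^r \geq n/\veps$ (giving $\prob[\bot] = (2^r \bmod n)/2^r \leq n/2^r \leq \veps$), and the lower bound is the same two-part counting argument, namely $2^R \geq n$ from the $n$ mandatory non-$\bot$ outputs and $2^R \geq 1/\veps$ from an unavoidable $\bot$ preimage, combined via $\max(a,b) \geq (a+b)/2$. Your explicit restriction to prime $n > 2$ so that $\gcd(2^R, n) = 1$ forces $2^R \bmod n \geq 1$ is a slightly more rigorous rendering of the paper's parenthetical ``for a general $n$, this will be the case,'' and correctly identifies that the $\veps$-dependence of the lower bound is only attained on such hard instances.
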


\noindent
The above idea for upper bound does not work in the streaming setting.
The main problem  in the streaming setting is that the value of $n$ is not known in advance -- the algorithm needs to maintain a uniform sample at {\em all} times.
One solution is reservoir sampling where fresh random bits are used after every new item arrives.
However, as we have seen, this is costly in terms of the amount of randomness used.
In the next section,  we discuss a sampling algorithm in the streaming setting that uses $O(\log{\frac{n}{\veps}})$ random bits till time $n$, and hence, matches the lower bound result mentioned above.

%%%%%%%%  New Section  %%%%%%%%
\subsection{Uniform samples in the streaming setting}\label{sec:result}

Let us try to understand some of the challenges of designing sampling algorithms in the streaming setting. Recall that $X_t$ is the random object
maintained by the algorithm at time $t$. Since the algorithm is allowed to store only one object at any time, it does not store any other object at
time $t$. 
At time $t+1$, when $O_{t+1}$ arrives, the algorithm has only three choices for $X_{t+1}$ -- $X_t, O_{t+1}$ or $\bot$. We shall use
$r_t$ to denote the number of random bits used by our algorithm till time $t$. Given a sequence $x_t$ of $r_t$ random bits, let $f_t(x_t)$ denote
the object stored by the algorithm at time $t$, i.e., $X_t = f_t(x_t)$. Note that the functions $f_t$ need to satisfy a  ``consistency'' property:
if $x \in \bits^{r_{t}}$ is a prefix of a string $y \in  \bits^{r_{t+1}},$ then $f_{t+1}(y)$ is either $f_t(x)$ or $O_{t+1}$ or $\bot$. 
We now describe our algorithm that we call the {\em doubling-chopping} algorithm.

\subsubsection{The algorithm}
For each time $t$ and  $i \in \{1, \ldots, t\} \cup \{ \bot \}$, the algorithm will maintain an ordered set $H^t_i \subseteq \bits^{r_t}$  of strings $x_t$ for which $f_t(x_t)= O_i$ (or $\bot$).  
Of course, this will lead to large space complexity -- we will later show that these sets can be maintained implicitly. 
Initially, at time $0$, $H^0_{\bot}=\emptyset$ and $r_0 = 0$. 
We first describe the doubling step in Figure~\ref{fig:double-chop}. 
The goal of this step is to ensure that $2^{r_t}$ stays larger than $\frac{(t+1)^2}{\veps}$. 
Whenever this does not happen, we increase the value of $r_t$ to ensure that this is the case.
The functions $f_t$ are updated accordingly -- they just look at the first $r_t$ bits of the input.

  \begin{figure}[ht]
    \begin{boxedminipage}{0.39\linewidth}
     {\bf Algorithm {\tt Double($t$)} :} \medskip\\
       \sp \sp 1. $r_t \leftarrow r_{t-1}$. \\ 
       \sp \sp 2. {\bf For} $i \in \{1, ..., t-1\}\cup \{\bot\}$ \\
       \sp \sp \sp \sp \sp \sp $\bullet$ Initialize $H_i^t \leftarrow H_i^{t-1}$. \\
       \sp \sp 3. {\bf While} $2^{r_t} < \frac{(t+1)^2}{\veps}$ \\
       \sp \sp \sp \sp \sp (i) $r_t \leftarrow r_t + 1.$ \\
       \sp \sp \sp \sp \sp  (ii) {\bf For} $i \in \{ 1, ..., t-1 \} \cup \{\bot\}$ \\
       \sp \sp \sp \sp \sp \sp \sp \sp \sp \sp $\bullet$ Initialize $H \leftarrow \emptyset$. \\
       \sp \sp \sp \sp \sp \sp \sp \sp   \sp \sp $\bullet$ {\bf For} each $x \in H^t_i$ in order \\
       \sp \sp \sp \sp \sp \sp \sp \sp \sp \sp \sp \sp \sp \sp \sp \sp \sp \sp append $0x$ to $H$. \\
       \sp \sp \sp \sp \sp \sp \sp \sp \sp \sp $\bullet$ {\bf For} each $x \in H^t_i$ in order \\
       \sp \sp \sp \sp \sp \sp \sp \sp \sp \sp \sp \sp \sp \sp \sp \sp \sp \sp append $1x$ to $H$. \\
       \sp \sp \sp \sp \sp \sp \sp \sp \sp \sp $\bullet$ $H_{i}^{t} \leftarrow H$. 
      \end{boxedminipage}      
       \quad
       \begin{boxedminipage}{0.57\linewidth}
     {\bf Algorithm {\tt Chop($t$)} :} \medskip\\
       \sp \sp 1. {\bf For} every $i \in \{1, \ldots, t-1\}$ \\
       \sp \sp \sp \sp \sp \sp \sp  Define $T^{t}_i \leftarrow $  last $\left( |H^t_i| - \lfloor 2^{r_t}/t \rfloor \right)$ strings in $H^t_i$. \\
       \sp \sp \sp \sp \sp \sp \sp Define $H^t_i \leftarrow H^t_i \setminus T^t_i.$ \\
       \sp \sp 2. Initialize $T \leftarrow \emptyset$. \\
       \sp \sp 3. {\bf For} $i =1, \ldots, t-1$ \\
       \sp \sp \sp \sp \sp \sp \sp $T \leftarrow \append(T, T^t_i)$. \\
       \sp \sp 4. {\bf If} $|T| > \lfloor 2^{r_t}/t \rfloor$ \\
       \sp \sp \sp \sp \sp \sp \sp \sp \sp (a) $T^t_{t} \leftarrow$ last $ \left(|T| - \lfloor 2^{r_t}/t\rfloor \right)$ strings in $T$. \\
       \sp \sp \sp \sp \sp \sp \sp \sp \sp (b) $H^{t}_{t} \leftarrow  T \setminus T^{t}_{t}$ and $H^{t}_{\bot} \leftarrow \append(H^t_{\bot},
       T^t_t)$ \\
       \sp \sp \sp \sp \sp \sp {\bf Else} \\
       \sp \sp \sp \sp \sp \sp \sp \sp \sp  (i) $T^t_{\bot} \leftarrow$  last $\lfloor 2^{r_t}/t\rfloor-|T|$  strings of $H^t_{\bot}$. \\
        \sp \sp \sp \sp \sp \sp \sp \sp \sp  (ii) Set $H^{t}_{\bot} \leftarrow  H^t_{\bot} \setminus T^t_{\bot}$ and  $H^{t}_{t} \leftarrow \append(T, T^t_{\bot})$.
      \end{boxedminipage}
      \caption{The doubling and chopping steps}
       \label{fig:double-chop}
    \end{figure}

\noindent
Note that after we call the algorithm {\tt Double}, the new $r_{t} - r_{t-1}$ bits do not participate in the choice of random sample. 
In Step 3 of the {\tt Double} algorithm, the set $H^{t}_i$ is an ordered list -- ``append'' adds an element to the end of the list.

\noindent
The next step, which we call the chopping step, shows how to modify the function $f_t$ so that
some probability mass moves towards $O_{t}$. 
The algorithm is described in Figure~\ref{fig:double-chop}. 
The function $\append(T_1, T_2)$ takes two ordered lists and outputs a new list obtained by first taking all the elements in $T_1$ followed by the elements in $T_2$ (in the same order). 
The algorithm maintains the sets $H_i^t$, where $i \in \{1, \ldots, t\} \cup \{\bot\}$. 
Given these sets, the function $f_t$ is immediate. 
If the string $x \in \bits^{r_t}$ lies in the set $H_i^t$, then $f_t(x) = i$.

\noindent
To summarise, at time $t>1$, we first call the function {\tt Double$(t)$} and then the function {\tt Chop$(t)$} (at time $t=1$ we only call {\tt Double$(1)$}).
It is also easy to check that the functions $f_t$ satisfy the consistency criteria.
\begin{lemma}
\label{lem:cons}
Suppose $x \in \bits^{r_{t-1}}$ and $y \in \bits^{r_t - r_{t-1}}$. Then, $f_t(yx)$ is either $f_{t-1}(x)$ or $O_t$ or $\bot$.
\end{lemma}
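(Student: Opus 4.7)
The plan is to analyze the two subroutines \texttt{Double}($t$) and \texttt{Chop}($t$) separately, tracking how the sets $H_i^t$ evolve and which set each string $yx \in \bits^{r_t}$ ends up in.

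First I would establish an intermediate claim about the doubling phase: after \texttt{Double}($t$) returns, for every $i \in \{1,\ldots,t-1\}\cup\{\bot\}$ and every $x \in H_i^{t-1}$, all $2^{r_t - r_{t-1}}$ extensions $yx$ with $y \in \bits^{r_t - r_{t-1}}$ lie in $H_i^t$. The proof is by induction on the iterations of the while loop: each iteration takes the current ordered set $H_i^t$ and produces a new one whose contents are exactly $\{0z : z \in H_i^t\} \cup \{1z : z \in H_i^t\}$. By induction, after $r_t - r_{t-1}$ iterations the original $x \in H_i^{t-1}$ has been extended to every possible prefix $y \in \bits^{r_t - r_{t-1}}$. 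Thus, if we let $\tilde f_t$ denote the partial function defined by the sets $H_i^t$ right after \texttt{Double}($t$), we get $\tilde f_t(yx) = f_{t-1}(x)$ for every $x \in \bits^{r_{t-1}}$ and every $y \in \bits^{r_t - r_{t-1}}$.

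Next I would trace the chopping phase. The key observation is that \texttt{Chop}($t$) only moves strings between sets in a very restricted way: (i) some suffixes of each $H_i^t$ with $i \in \{1,\ldots,t-1\}$ are removed and collected into the list $T$; (ii) elements of $T$ are then placed either into $H_t^t$ (the newly created set for $O_t$) or into $H_\bot^t$; (iii) in the \texttt{Else} branch, a suffix of $H_\bot^t$ is moved into $H_t^t$. In particular, no string that started in $H_i^t$ (for some $i \leq t-1$) can land in $H_j^t$ for any $j \notin \{i,t,\bot\}$, and no string in $H_\bot^t$ can land outside $\{H_\bot^t, H_t^t\}$. Consequently, after \texttt{Chop}($t$), any string $w \in \bits^{r_t}$ either remains in the same set as it was right after \texttt{Double}($t$), or is in $H_t^t$, or is in $H_\bot^t$.

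Combining the two phases: for an arbitrary $yx \in \bits^{r_t}$, the post-\texttt{Double} assignment was $\tilde f_t(yx) = f_{t-1}(x)$, and \texttt{Chop} can only leave this assignment unchanged or replace it with $O_t$ or $\bot$. This is exactly the claim of the lemma. The argument is essentially bookkeeping; the main thing to be careful about is (a) verifying that the doubling step really produces the full set of extensions in ordered form, and (b) checking that the \texttt{Else} branch of \texttt{Chop} preserves the invariant when $|T| \leq \lfloor 2^{r_t}/t\rfloor$, since in that case strings actually flow from $H_\bot^t$ into $H_t^t$ rather than the other way around. Both are direct from the pseudocode, so I do not expect a genuine obstacle here beyond careful notation.
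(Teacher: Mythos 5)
Your proposal is correct and follows essentially the same approach as the paper's proof: trace the string through \texttt{Double}($t$) (showing $yx$ lands in the same class $H_i^t$ as $x$ was in $H_i^{t-1}$) and then observe that \texttt{Chop}($t$) can only relocate strings to $H_t^t$ or $H_\bot^t$. Your treatment is in fact a bit more careful than the paper's terse argument, since you explicitly handle the case $x \in H_\bot^{t-1}$, where a string can move directly from $H_\bot^t$ to $H_t^t$ in the \texttt{Else} branch rather than passing through $T$.
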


\begin{proof}
Let $x$ and $y$ be as above. 
Suppose $x \in H^{t-1}_i$ (and so, $f_{t-1}(x)=i$). 
After the {\tt Double($t$)} function call, $yx \in H^t_i$.
Now consider the function {\tt Chop($t$)}. 
After  Step 1, if $yx \notin H^t_i$, then it must be the case that $yx$ gets added to the set $T$.
Now, notice that the strings in $T$ get added to either $H^t_t$ or $H^t_\bot$. 
This proves the lemma.\qed
\end{proof}

\noindent
The lemma above implies  that  we can execute the algorithm by storing only one object at any time.  Now, we show that the number of random bits
used by the algorithm is small.
\begin{lemma}
\label{lem:random}
The total number of random bits used by the algorithm till time $n$ is $O( \log \frac{n}{\veps})$.
\end{lemma}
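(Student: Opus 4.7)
The plan is to observe that the algorithm's random bits are consumed only by the {\tt Double} procedure and that {\tt Double} explicitly caps the total number of bits by the loop guard $2^{r_t} < (t+1)^2/\veps$. The {\tt Chop} step only rearranges the ordered lists $H^t_i$ and creates no new bits, so $r_t$ (the total number of random bits used through time $t$) changes only inside the while-loop of {\tt Double}$(t)$. Thus the quantity I must bound is exactly $r_n$.

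First I would record the invariant maintained by {\tt Double}: the while loop terminates precisely when $2^{r_t} \geq (t+1)^2/\veps$. Moreover each iteration increments $r_t$ by $1$, so just before the final increment the guard must have held, giving $2^{r_t - 1} < (t+1)^2/\veps$. Therefore
\[
r_t \;\leq\; \left\lceil \log \frac{(t+1)^2}{\veps} \right\rceil \;=\; 2\log(t+1) + \log \tfrac{1}{\veps} + O(1).
\]
Since {\tt Double}$(t)$ initializes $r_t \leftarrow r_{t-1}$ and only ever increases it, the sequence $r_1 \leq r_2 \leq \cdots$ is monotone, so its value at time $n$ is simply the final count. Plugging $t=n$ into the displayed inequality gives $r_n = O(\log(n/\veps))$, which is the claimed bound.

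There is essentially no technical obstacle here; the only thing to be careful about is justifying that {\tt Chop} is deterministic given the sets $H^{t}_i$ (which follows by inspection of Figure~\ref{fig:double-chop}, since every operation there is a deterministic partition or concatenation of already-existing ordered lists) and that the $r_t - r_{t-1}$ newly introduced bits in a {\tt Double} call are indeed fresh random bits contributing to the running total. Once these two bookkeeping points are noted, the bound on $r_n$ follows directly from the while-loop guard.
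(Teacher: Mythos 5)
Your proposal is correct and follows essentially the same route as the paper: the paper's one-line argument is that $r_n$ bounds the total bits used and that, by the while-loop guard, $2^{r_n} \leq 2(n+1)^2/\veps$, which is exactly your ``one step before termination'' observation. You spell out the bookkeeping (monotonicity of $r_t$, determinism of {\tt Chop}) more explicitly, but the core inequality and conclusion are identical.
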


\begin{proof}
Till time $n$, the algorithm uses at most $r_n$ bits. Now, by definition, $2^{r_n} \leq \frac{2(n+1)^2}{\veps}$. This proves the lemma.\qed
\end{proof}

\noindent
Now, we prove the correctness of the algorithm.
\begin{lemma}
\label{lem:correct}
For all time $t$, and $i \in \{1, \ldots, t\},$ $|H^t_i| = \left \lfloor \frac{2^{r_t}}{t} \right \rfloor$, and $|H^t_\bot| \leq \veps \cdot 2^{r_t}$.
\end{lemma}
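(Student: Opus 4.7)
The plan is to prove both invariants simultaneously by induction on $t$. The base case $t=1$ can be checked by direct inspection of the algorithm's execution: after {\tt Double}$(1)$ inflates the (initially trivial) state up to $2^{r_1} \geq 4/\veps$ strings, {\tt Chop}$(1)$ takes the empty or Else branch, dumps all of $\bits^{r_1}$ into $H^1_1$, and leaves $H^1_\bot = \emptyset$; so $|H^1_1| = 2^{r_1} = \lfloor 2^{r_1}/1 \rfloor$ and $|H^1_\bot| = 0 \leq \veps \cdot 2^{r_1}$.

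For the inductive step, suppose the invariants hold at time $t-1$. The call {\tt Double}$(t)$ proceeds by duplicating every ordered list bit by bit, so each set grows by a factor of $2^{r_t - r_{t-1}}$ and we end up with $|H^t_i| = 2^{r_t - r_{t-1}} \lfloor 2^{r_{t-1}}/(t-1) \rfloor$ for $i \leq t-1$ and $|H^t_\bot| = 2^{r_t - r_{t-1}} |H^{t-1}_\bot| \leq \veps \cdot 2^{r_t}$. Since {\tt Double} also guarantees $2^{r_t} \geq (t+1)^2/\veps$ (and inductively $2^{r_{t-1}} \geq t^2/\veps$), both $r_t$ and $r_{t-1}$ are large enough for the arithmetic below.

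The main obstacle, and really the only nontrivial calculation, is to verify that the {\tt Chop} truncation in Step~1 is well-defined, i.e. that after {\tt Double} we already have $|H^t_i| \geq \lfloor 2^{r_t}/t \rfloor$ for $i \leq t-1$. Writing $a = 2^{r_t - r_{t-1}}$ and $b = 2^{r_{t-1}}$, this reduces to showing $a\lfloor b/(t-1)\rfloor \geq \lfloor ab/t\rfloor$. Using $\lfloor b/(t-1)\rfloor > b/(t-1) - 1$ and the inductive bound $b \geq t^2/\veps \geq t(t-1)$, a few lines of algebra give $a\lfloor b/(t-1)\rfloor > ab/t \geq \lfloor ab/t\rfloor$.

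Once Step~1 of {\tt Chop} is valid, the rest is bookkeeping. After Step~1, $|H^t_i| = \lfloor 2^{r_t}/t\rfloor$ for $i \leq t-1$ by construction; $T$ collects the removed strings. A careful case analysis of Step~4 (both the If branch and the Else branch, the latter drawing the deficit from $H^t_\bot$, which has enough strings by the total-count equation $2^{r_t} = (t-1)\lfloor 2^{r_t}/t\rfloor + |T| + |H^t_\bot|$) shows that $H^t_t$ ends up with exactly $\lfloor 2^{r_t}/t\rfloor$ strings. Since the total number of strings is preserved at $2^{r_t}$, the remaining $H^t_\bot$ satisfies $|H^t_\bot| = 2^{r_t} - t\lfloor 2^{r_t}/t\rfloor = 2^{r_t}\bmod t < t$. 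Finally, the {\tt Double} guarantee $\veps\cdot 2^{r_t} \geq (t+1)^2 > t$ closes the second invariant, completing the induction.
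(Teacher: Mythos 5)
Your proof is correct and follows essentially the same inductive route as the paper: the same inflation-by-doubling computation of $|H^t_i|$, the same floor inequality (the paper proves $2^{r_t-r_{t-1}}\lfloor 2^{r_{t-1}}/(t-1)\rfloor \geq 2^{r_t}/t$ directly from $2^{r_{t-1}}\geq t^2/\veps$, your $a,b$ rewriting is algebraically identical), the same case split on Step 4 of {\tt Chop} with the same total-count equation, and the same final bound $|H^t_\bot| = 2^{r_t}-t\lfloor 2^{r_t}/t\rfloor < t \leq \veps\cdot 2^{r_t}$. The one small inaccuracy is your base case narrative: the paper starts the induction vacuously at $t=0$, and per the pseudocode {\tt Chop}$(1)$ is simply a no-op --- it is {\tt Double}$(1)$ that directly sets $|H^1_1|=2^{r_1}$ and $|H^1_\bot|=0$, rather than {\tt Chop}$(1)$ taking an Else branch --- but your stated conclusion for $t=1$ is still right, so this does not affect the validity of the argument.
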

\begin{proof}
The proof is by induction on $t$. 
The base case ($t=0$) is true vacuously. 
Now suppose the lemma is true for $t-1$. 
At time $t$, we first call {\tt Double($t$)}. 
For each $x \in H^{t-1}_i$, we just append all bit strings of length $r_t - r_{t-1}$ to it and this set of strings to $H^t_i$. 
Therefore, when this procedure ends, $|H^t_i| = 2^{r_t-r_{t-1}} \cdot \lfloor \frac{2^{r_{t-1}}}{t-1} \rfloor$, for $i=1, \ldots, t-1$ (using induction hypothesis) and we have
\begin{eqnarray*}
|H_i^{t}| = 2^{r_t-r_{t-1}} \cdot \left \lfloor \frac{2^{r_{t-1}}}{t-1} \right \rfloor 
\geq 2^{r_t-r_{t-1}} \cdot \left( \frac{2^{r_{t-1}}}{t-1} - 1\right) 
\geq \frac{2^{r_t}}{t} \quad (\textrm{since $2^{r_{t-1}} \geq t^2/\veps$})
\end{eqnarray*}
In Step 1 of the procedure {\tt Chop($t$)}, we ensure that $|H^t_i|$ becomes $\lfloor \frac{2^{r_t}}{t} \rfloor$ (this step can be done, because the $|H^t_i|$ was at least $\lfloor \frac{2^{r_t}}{t} \rfloor$). 
After this step, we do not change $H^t_i$ for $i=1, \ldots, t-1$, and hence, the induction hypothesis is true for these sets. 
It remains to check the size of $H^t_t$ and $H^t_\bot$.

\noindent
First assume that $|T| \geq \lfloor \frac{2^{r_t}}{t} \rfloor$. 
In this case, $H^t_t$ gets exactly $\lfloor \frac{2^{r_t}}{t} \rfloor$ elements.
Now suppose  $|T| < \lfloor \frac{2^{r_t}}{t} \rfloor$. 
First observe that $H^t_\bot$ and $T$ are disjoint. 
Since all strings not in $H^t_i, i = 1, \ldots, t-1$ belong to either $H^t_\bot$ or $T$, it follows that
$$|H^t_\bot| + |T| = 2^{r_t}  - (t-1) \cdot \left \lfloor \frac{2^{r_t}}{t} \right \rfloor \geq \left \lfloor \frac{2^{r_t}}{t} \right \rfloor .$$
Therefore, $|H^t_\bot|$ is at least $\lfloor \frac{2^{r_t}}{t} \rfloor -|T|$, and Step 4(i) in this case can be executed. 
Clearly, $|H^t_t|$ becomes $\lfloor \frac{2^{r_t}}{t} \rfloor$ as well.
Finally,
$$|H^t_\bot| = 2^{r_t} - t \cdot \left \lfloor \frac{2^{r_t}}{t} \right \rfloor \leq 2^{r_t} - t \left( \frac{2^{r_t}}{t} - 1 \right) = t \leq \veps \cdot 2^{r_t},$$
where the last inequality follows from the definition of $r_t$.
\qed
\end{proof}

%%%%%%%%  New Section  %%%%%%%%
\subsubsection{Space Complexity}
Note that the use of the sets $H^t_i$ in our algorithm was just for sake of clarity. 
We need not maintain these sets explicitly.
For the current random string $x$ (at time $t$), we just need to keep track of the set $H^t_i$ to which it belongs -- call this set $L(x)$ (the {\em location} of $x$). 
In fact, not only we will keep track of $L(x)$, but we will also keep track of the {\em rank} of $x$ in the set $L(x)$ -- recall that the sets $H^t_i$ are ordered lists, and so, the rank of an element is its position in this order.
In addition, we will also keep track of $|H_i^t|$ for $i \in \{1, ..., t\}\cup \{\bot\}$. 
Note that this includes saving only two numbers since $|H_1^t| = ... = |H_t^t|$.
The pseudocodes of our algorithms for implementation purposes are given in the Appendix.

\begin{lemma}
\label{lem:space}
For  every time $t$, the location and the rank of the current random string $x_t$ can be maintained using $O(\log \frac{t}{\veps})$ space.
\end{lemma}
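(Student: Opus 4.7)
The plan is to show that, rather than materializing the ordered lists $H^t_i$, we can carry forward through one update only the following state: (i) the label $L(x_t) \in \{1,\ldots,t\} \cup \{\bot\}$, (ii) the rank $\rho_t$ of $x_t$ inside $L(x_t)$, (iii) the common size $s_t := |H^t_i|$ for $i \in \{1,\ldots,t\}$, (iv) the null-set size $b_t := |H^t_\bot|$, together with the auxiliary integers $r_t$ and $t$. The object $O_{L(x_t)}$ is the one data item the algorithm is allowed to hold.

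First I would bound each of these quantities. By Lemma~\ref{lem:random}, $r_t = O(\log \frac{t}{\veps})$, and by Lemma~\ref{lem:correct}, $s_t = \lfloor 2^{r_t}/t\rfloor$ and $b_t \leq \veps \cdot 2^{r_t}$, so both fit in $O(\log \frac{t}{\veps})$ bits. The rank $\rho_t$ is at most $2^{r_t}$, which is $O(\log \frac{t}{\veps})$ bits, and the label needs $O(\log t)$ bits. Thus the stored state has the claimed size.

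Next I would verify that a single update $(s_{t-1},b_{t-1},L,\rho) \mapsto (s_t,b_t,L',\rho')$ can be computed from the stored state alone, by walking through {\tt Double}$(t)$ and {\tt Chop}$(t)$. For each iteration of the while-loop inside {\tt Double}, a new random bit $c$ is drawn; by the construction that appends all $0x$ strings before all $1x$ strings, the label is unchanged and the rank updates to $\rho$ if $c=0$ and to $\rho + |H^t_{L}|$ if $c=1$, where the current set-size $|H^t_L|$ is itself obtained by doubling from iteration to iteration starting from $s_{t-1}$ (or $b_{t-1}$). For {\tt Chop}, I use the key fact that the sets $H^t_1,\ldots,H^t_{t-1}$ all have the same size just before chopping, so the $T^t_i$'s are of equal size $\Delta := |H^t_L| - \lfloor 2^{r_t}/t\rfloor$ and the offset of $T^t_i$ inside the concatenation $T$ is $(i-1)\Delta$. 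From this, the location-and-rank of $x_t$ after chopping splits into a constant number of cases depending on whether $\rho$ lies in the ``kept'' prefix of $H^t_L$, in $T$, in the ``kept'' prefix of $H^t_\bot$, or in the ``moved'' suffix of $H^t_\bot$; each case gives $(L',\rho')$ by an explicit arithmetic formula in the stored integers, and simultaneously updates $s_t$ and $b_t$ as in the proof of Lemma~\ref{lem:correct}.

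The main obstacle is purely bookkeeping: showing that {\tt Chop}'s reshuffling, which in the naive description depends on the unmaintained sets $T^t_1,\ldots,T^t_{t-1}$ and $H^t_\bot$, actually depends only on their sizes so that every update is a constant-length computation over $O(\log \frac{t}{\veps})$-bit integers. Once this is verified, combining the bit-bounds on $L, \rho, s_t, b_t, r_t, t$ with the single stored object $O_{L(x_t)}$ yields the claimed $O(\log\frac{t}{\veps})$ space bound. \qed
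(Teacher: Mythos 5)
Your proposal is correct and takes essentially the same route as the paper: maintain only the label, rank, common set size $|H^t_i|$, null-set size $|H^t_\bot|$, and $r_t$, then argue by induction that both {\tt Double} and {\tt Chop} can update these $O(\log\frac{t}{\veps})$-bit quantities in place, exploiting that all non-null sets have equal size so the ranks inside $T$ can be computed from arithmetic offsets. In fact your write-up spells out the case analysis for {\tt Chop} more explicitly than the paper's proof (which just asserts the ranks can be computed), and matches the pseudocode given in the paper's appendix.
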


\begin{proof}
Suppose the statement is true for $t-1$, and say, $x_{t-1} \in H^{t-1}_i$. During {\tt Double($t$)}, we will append a random string $y \in
\bits^{r_t-r_{t-1}}$ to $x_{t-1}$, i.e., $x_t = y x_{t-1}$. 
For every string preceding $x_{t-1}$ in $H^{t-1}_i$, we will add $2^{r_t - r_{t-1}}$ strings to $H^t_i$. 
Hence, one can easily determine the rank of $x_t$ in $H^{t}_i$. Using this fact, we can check whether $x_t$ gets transferred to $T$ or not during Step 1 of {\tt Chop($t$)}. 
%Moreover, since the size of the sets $H^t_i$ (at the beginning of Step 1) is $2^{r_t - r_{t-1}} \cdot \lfloor \frac{2^{r_t}}{t-1} \rfloor$ (note that we have this stored), we can even calculate the rank of $t$ in $T$. 
Moreover, since we know the size of the sets $H^t_i$ (at the beginning of Step 1), we can even calculate the rank of $t$ in $T$. 
Since we also know the size of $H^t_\bot$, we can check if $x_t$ gets transferred to $H^t_t$ or $H^t_\bot$ in Step 4 (and its rank in this set).
The space needed by the algorithm is proportional to $r_t$, which is $O(\log \frac{t}{\veps})$.
\qed
\end{proof}

%%%%%%%%  New Section  %%%%%%%%
\subsubsection{Running time}
Finally, we analyse the running time of the algorithm after $n$ time steps. 
The total number of iterations of {\tt While} loop in 
{\tt Double($n$)} is at most $r_n$, i.e., $O(\log \frac{n}{\veps})$. The time taken by {\tt Chop($n$)} is constant number of arithmetic operations -- Step 1 is constant number of operations. 
If the string happens to be in $T$, its rank can be computed in constant number of operations,
and similarly for Step 4. 
Therefore, the total running time till time $n$ is $O(n + \log{\frac{n}{\veps}})$. 
However, the running time of {\tt Double} per unit time step can be more than a constant. 
It is not difficult to see that except for the first time step (when we need to make $r_1 = \lceil \log \frac{4}{\veps} \rceil$), $r_{t+1}$ is at most $r_t + 2$. 
Therefore, except for the first two time steps, the running time per time step is constant number of arithmetic operations.  
Figure~\ref{fig:sampling2} show the operations performed by the sampling algorithm after the arrival of the second item.

\noindent
Note that in the above analysis, the running time is in terms of the number of arithmetic operations. 
However, as $n$ grows, the number of bit operations is a more relevant measure.
Since at each time step, the arithmetic operations are over numbers of size $O(\log{\frac{n}{\veps}})$-bits, the total running time in terms of bit operations will be $O\left( \left(\log{\frac{n}{\veps}}\right)^2 \cdot (n + \log{\frac{n}{\veps}}) \right)$ and the per item running time will be $O\left( \left(\log{\frac{i}{\veps}}\right)^2\right)$ (w.r.t. the $i^{\textrm{th}}$ item).

\begin{figure}
\centering
\includegraphics[scale=0.5]{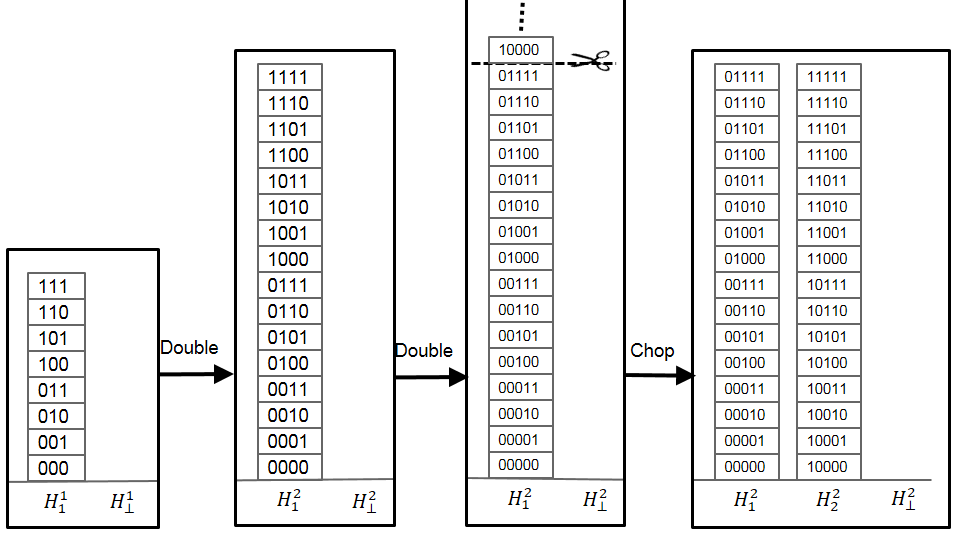}
\caption{The figure shows simulation of the doubling-chopping algorithm at time $t = 2$ when the value of $\veps = 1/2$.}
\label{fig:sampling2}
\end{figure}

%%%%%%%%  New Section  %%%%%%%%
\subsection{Weighted sampling}\label{sec:wtd}
So far we have discussed uniform sampling.
We can now talk about a more general setting where items in the stream come along with an associated integer weight $w_i$ and after seeing $n$ elements the sampling algorithm should be prepared to output the $i^{th}$ item with probability proportional to $w_i$.
Further, as in the uniform sampling case, the algorithm is allowed to output $\bot$ with probability at most $\veps$.
More specifically, let the algorithm output $\bot$ with probability $p$.
Then $p \leq \veps$ and the probability that it outputs item $i$ is given by $(1 - p) \cdot \frac{w_i}{\sum_i w_i}$.
The simplest solution is to consider  $w_i$ copies of item $i$ and simulate our sampling algorithm.
The number of random bits that are required is $\log{\left(\frac{\sum_i w_i}{\veps}\right)}$.
Given that each of the weights $w_1, ...$ is a $w$-bit integer, we get that the upper bound is $O(\log{\frac{n \cdot 2^w}{\veps}}) = O(w + \log{\frac{n}{\veps}})$.
From Lemma~\ref{lem:offline}, we know that the lower bound is $\Omega(\log{\frac{n}{\veps}})$ when $w_1 = w_2 = ... = 1$.
Furthermore, given that $w_1 = 1$ and $w_2 = 2^{w}-1$, any sampling algorithm would need at least $w$ bits for uniform sampling. This gives another lower bound of $\Omega(w)$. 
From the last two statements, we get that the lower bound on the number of random bits required for weighted sampling is $\Omega(w + \log{\frac{n}{\veps}})$ which matches with our upper bound.
In this setting, simple space/time optimisations lead to a sampling algorithm with running time $O(n + w + \log{\frac{1}{\veps}})$ (with per item time $O(w)$) and space $O(w + \log{\frac{n}{\veps}})$.

\section{Succinct (approximate) Sampling}

In this section, we consider approximate sampling in the succinct data-structure model. 
Recall that we are given a set of $n$ elements, labelled $1, \ldots, n$, and weights $x_1, \ldots, x_n$ associated with these $n$ elements respectively. 
Each weight $x_i$ is a $w$-bit integer. We will assume throughout the discussion that $w = o(n)$ which is a reasonable assumption since $w$ is typically a single precision ($w = 32$) or double precision ($w = 64$) number.
We are allowed to store a suitable representation of these
weights such that we can perform sampling efficiently. More formally, let $p_i$ denote $x_i/({\sum_j x_j})$. Given an
error parameter $\veps$, we consider two notions of approximate sampling -- multiplicative and additive. In the multiplicative
model, we are required to output a random element such that the probability that $i$ is output lies in the range $[p_i(1-\veps),
p_i(1+\veps)]$. In the additive model, the corresponding probability of $i$ lies in the range $[p_i - \veps, p_i + \veps]$.

\subsection{Approximate sampling: multiplicative model}
In this section, we give upper and lower bounds on the amount of space needed to perform approximate sampling with (multiplicative) error $\veps$.
We first discuss the upper bound by giving our sampling algorithm and then give matching lower bounds. 
For simplicity, we assume that $\veps$ is a power of $2$ (this only affects the bounds by a constant factor).

\vspace{0.1in}

\noindent
{\bf Upper Bound}: 
For each $i$, let $f_i$ denote the location of the most significant bit (MSB) in $x_i$ which is 1 (i.e., the first $f_i-1$ bits of $x_i$ are 0). 
Let $x_i'$ denote the number obtained by taking the first $f_i + \log{\frac{2}{\eps}}$ bits of $x_i$ followed by $\left(w - f_i - \log{\frac{2}{\veps}}\right)$ 0's. 
It is easy to check that exact sampling with respect to the weights $x_i'$ leads to approximate sampling with respect to $x_i$ with error at most $\veps$.

\begin{lemma}
\label{cl:mult}
For all $i$,
$(1-\veps) \cdot p_i \leq \frac{x_i'}{\sum_j x_j'} \leq (1+\veps) \cdot p_i$.
\end{lemma}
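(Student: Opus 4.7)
The plan is to compare $x_i'$ to $x_i$ element-by-element and then pass to the ratio.

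First I would establish the two-sided bound $(1-\veps/2)\, x_i \le x_i' \le x_i$ for every $i$. The upper bound is immediate: $x_i'$ is obtained from $x_i$ by zeroing out some suffix of bits, so it can only decrease the value. For the lower bound, observe that because position $f_i$ is the MSB of $x_i$, we have $x_i \ge 2^{w-f_i}$. The bits that get zeroed are exactly those in positions $f_i + \log(2/\veps)+1, \ldots, w$, whose total contribution to $x_i$ is strictly less than $2^{\,w - f_i - \log(2/\veps)} = (\veps/2)\cdot 2^{w-f_i} \le (\veps/2)\, x_i$. Hence $x_i - x_i' \le (\veps/2)\, x_i$, which gives the claimed lower bound.

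Next I would sum this inequality over all $j$ to obtain
\[
(1-\veps/2)\sum_j x_j \;\le\; \sum_j x_j' \;\le\; \sum_j x_j.
\]
Dividing the bounds on $x_i'$ by these bounds on $\sum_j x_j'$ (in the appropriate directions) yields
\[
(1-\veps/2)\, p_i \;\le\; \frac{x_i'}{\sum_j x_j'} \;\le\; \frac{p_i}{1 - \veps/2}.
\]
The left inequality already implies $\frac{x_i'}{\sum_j x_j'} \ge (1-\veps)p_i$. For the right inequality I would use the elementary estimate $\frac{1}{1-y} \le 1+2y$ valid for $y \in [0,1/2]$ with $y = \veps/2$ (assuming $\veps\le 1$, otherwise the statement is vacuous), giving $\frac{x_i'}{\sum_j x_j'} \le (1+\veps)p_i$.

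There is no real obstacle here — the whole argument is a routine relative-error calculation, and the only mild subtlety is keeping track of the factor of $2$ hidden in the $\log(2/\veps)$ truncation length, which is precisely what allows the final $(1+\veps)$ factor (rather than $(1+\veps/2)$) to absorb the $1/(1-\veps/2)$ on the upper side.
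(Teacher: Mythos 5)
Your proof is correct and follows essentially the same route as the paper: both establish the element-wise bound $(1-\veps/2)x_i \le x_i' \le x_i$, sum it over the denominator, and then absorb the $1/(1-\veps/2)$ factor into $(1+\veps)$. You simply spell out in more detail why the truncation loses at most a $\veps/2$ fraction of $x_i$, a step the paper presents as an unexplained observation.
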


\begin{proof}
Observe that for all $i$, $x_i - x_i' \leq \frac{\veps}{2} \cdot x_i$, which implies that $x_i \geq x_i' \geq (1 - \veps/2) \cdot x_i$. Using this fact, we get
$ (1-\veps) p_i \leq \frac{x_i(1-\veps/2)}{\sum_j x_j} \leq \frac{x_i'}{\sum_j x_j}
\leq \frac{x_i'}{\sum_j x_j'} \leq \frac{x_i}{(1-\veps/2)\sum_j x_j} \leq (1+\eps) p_i$.
\qed
\end{proof}

\noindent
Therefore, it is enough to run an exact sampling algorithm with weights $x_i'$ for all $i$. For this, we use the algorithm of
Bringmann and Larsen~\cite{bl} with respect to $x_i'$.
The space needed by this algorithm is $O(n+w')$, where $w'$ is the number of bits needed to store any of the weights. In our case, $w'$ can be as high as $w$, and so the space needed by their algorithm is $O(n+w)$.
There is one catch though: we need to store all the $x_i'$ using the same number of bits, and using $w$ bits would be a waste of space. Instead
we store each $x_i'$ as a tuple -- we first store the value of $f_i$ and then the value of the next $\log \frac{2}{\eps}$ bits. Note that
this representation uses $(\log w + \log  \frac{2}{\eps})$ bits for each of the $x_i'$. It is not difficult to check that the algorithm of
Bringmann and Larsen~\cite{bl} works with this representation as well. Thus, the total space needed by our algorithm is
$O(n \log w + n \log  \frac{2}{\eps})$.

\vspace{0.1in}

\noindent
{\bf Lower bound}:
In this section, we give lower bound on the amount of space needed for approximately sampling the elements with error $\veps$.
We say that a distribution given by $(y_1, ..., y_n)$ is $\veps$-close to a distribution given by $(x_1, \ldots , x_n)$ if $\forall i, (1-\eps)\cdot \frac{x_i}{\sum_j x_j} \leq \frac{y_i}{\sum_j y_j} \leq (1+\eps)\cdot \frac{x_i}{\sum_j x_j}$.
In such a case, $(y_1, \ldots , y_n)$ may be used to represent the distribution $(x_1, ..., x_n)$.
To get a lower bound on the space, we will  estimate the size of a set of tuples $S \subseteq [\bits^w]^n$ such that for any tuple $\bar{x} \in [\bits^w]^n$  there exist at least one element $\bar{y}$ in $S$ such that $\bar{y}$ is $\veps$-close to $\bar{x}$.
Let $\mathcal{S}$ denote the minimum amount of space needed.
We get a lower bound on $\mathcal{S}$ using the next two lemmas.
The proofs of these lemmas are given in the appendix.

\begin{lemma}\label{lemma:1}
$\mathcal{S} \geq n \log{\frac{1}{\eps}} - w - \log{n} - n $.
\end{lemma}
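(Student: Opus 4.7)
The plan is to prove the bound by a covering/counting argument on $(\{0,1\}^w)^n$. For every $\bar{y}\in S$, write $B(\bar{y}):=\{\bar{x}\in(\{0,1\}^w)^n:\bar{y}\text{ is }\veps\text{-close to }\bar{x}\}$. By the covering property of $S$, every $\bar{x}\in(\{0,1\}^w)^n$ lies in some $B(\bar{y})$ with $\bar{y}\in S$, so $|S|\cdot\max_{\bar{y}}|B(\bar{y})|\geq 2^{wn}$, and hence $\mathcal{S}\geq\log|S|\geq wn-\log\max_{\bar{y}}|B(\bar{y})|$. The task therefore reduces to giving a sharp upper bound on $|B(\bar{y})|$ for an arbitrary $\bar{y}$.

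To this end, fix $\bar{y}$ and let $T_y=\sum_j y_j$, $T_x=\sum_j x_j$. Rearranging the definition of $\veps$-closeness shows that each coordinate $x_i$ must lie in an interval of length at most $\tfrac{2\veps}{1-\veps^2}\cdot\tfrac{y_i T_x}{T_y}\leq\tfrac{4\veps\, y_i T_x}{T_y}$ (assuming $\veps\leq 1/2$). Partitioning $B(\bar{y})$ by the value of $T_x\in\{0,1,\ldots,n(2^w-1)\}$ and dropping the constraint $\sum_i x_i=T_x$, the number of tuples for fixed $T_x$ is at most $\prod_{i=1}^n(\tfrac{4\veps\, y_i T_x}{T_y}+1)$, which by AM-GM combined with $\sum_i y_i/T_y=1$ is at most $(\tfrac{4\veps T_x}{n}+1)^n$. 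Summing over $T_x$, and using $\tfrac{4\veps T_x}{n}+1\leq O(\veps\cdot 2^w)$ in the regime that dominates, yields $|B(\bar{y})|\leq n\cdot 2^w\cdot(C\veps)^n\cdot 2^{wn}$ for an absolute constant $C$. Substituting back produces $\mathcal{S}\geq n\log(1/\veps)-w-\log n-O(n)$, matching the stated claim.

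The main obstacle is sharpening the additive $O(n)$ down to the stated $-n$. A blanket absorption of the ``$+1$'' inside the product, or a naive bound of the sum over $T_x$ by ``number of terms $\times$ maximum term'', each forfeits an additional additive $\Theta(n)$. To achieve the tight constant one should split the $T_x$ sum into a large regime (where $4\veps y_i T_x/T_y\geq 1$ and the ``$+1$'' contributes only a benign factor of two, absorbable by the constant $C$) and a small regime (whose contribution telescopes into a negligible term), and treat coordinates with very small $y_i$ separately by grouping light and heavy indices so that AM-GM does not waste a factor at those coordinates. These calculations are routine but require careful bookkeeping to avoid the extra $O(n)$ slack.
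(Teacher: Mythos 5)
Your approach is the same covering/volume argument the paper uses: bound the number of inputs that a single representative $\bar y$ can serve, and divide the total input count by that bound. The difference is organizational. The paper first applies a pigeonhole argument to fix a popular total weight $T$ and works entirely inside the slice $U=\{\bar x:\sum_i x_i=T\}$, where $\veps$-closeness becomes the coordinate-wise condition $(1-\veps)x_i\le y_i\le(1+\veps)x_i$, so the ball size is bounded by $\prod_i(2\veps x_i)\le(2\veps)^n 2^{wn}$ directly; the pigeonhole loss appears as $|U|\ge 2^{nw}/(n\cdot 2^w)$ and contributes exactly the $-\log n - w$ terms. You instead stay in the full cube, partition each ball by the value of $T_x=\sum_i x_i$, and sum; the $\le n\cdot 2^w$ possible values of $T_x$ contribute the same $-\log n - w$. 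Your AM--GM step replaces the paper's cruder $x_i<2^w$ bound, but both yield an $(O(\veps))^n 2^{wn}$ ball volume, so the two computations are dual bookkeepings of the same estimate. If anything, your formulation of the ball (the set of inputs $\bar x$ covered by a fixed representative $\bar y$) is the cleaner one for a covering argument; the paper defines its ball, centred at a fixed $\bar x$ and collecting nearby $\bar y\in U$, i.e.\ in the reverse direction, relying implicitly on the near-symmetry of the closeness relation within a fixed-sum slice.

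On the additive constant: you are right that the ``$+1$'' per coordinate, handled crudely, costs an extra $\Theta(n)$, and that you therefore obtain $-O(n)$ rather than $-n$. But the paper's proof carries the very same slack --- it replaces ``at most $\lfloor 2\veps x_i\rfloor+1$ integers'' by ``$2\veps x_i$ values'' without comment. Since the lemma is invoked only to conclude $\mathcal{S}=\Omega(n\log\tfrac1\veps)$ when $\veps$ is a constant bounded away from $1$, the constant in front of $n$ is immaterial, and your bound is as tight as the paper's in any substantive sense; the regime split you sketch would sharpen the constant but is not needed.
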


\begin{lemma}\label{lemma:2}
$\mathcal{S} \geq n \log{w} - n \log{4 (1 + 2 \eps)} - \frac{w}{2} \log{(e^2 n)}$.
\end{lemma}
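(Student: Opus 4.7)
The plan is a packing argument: I will construct a set $T\subseteq[\bits^w]^n$ such that no single $\bar y$ can be $\veps$-close to two distinct elements of $T$. Then any valid representation $S$ must have a distinct codeword $\veps$-close to each element of $T$, so $|S|\geq|T|$ and hence $\mathcal{S}\geq\log|T|$.

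First I would establish a separation criterion: if $\bar y$ is $\veps$-close to both $\bar x$ and $\bar x'$, then $\max_i(x_i/x'_i)/\min_i(x_i/x'_i)\leq\delta^2$, where $\delta:=(1+\veps)/(1-\veps)$. This follows because $\veps$-closeness of $\bar y$ to $\bar x$ is equivalent to all ratios $y_i/x_i$ lying in the common interval $[(1-\veps)Y/X,(1+\veps)Y/X]$ (with $X=\sum_j x_j$, $Y=\sum_j y_j$), which has multiplicative width $\delta$; the analogous statement holds for $\bar x'$, and the quotient $x_i/x'_i=(y_i/x'_i)/(y_i/x_i)$ composes the two widths to give the bound $\delta^2$.

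Next I would fix the smallest integer $K$ with $2^K\geq 4(1+2\veps)$; expanding $4(1+2\veps)(1-\veps)^2-(1+\veps)^2$ shows $2^K>\delta^2$ in the relevant range of $\veps$. Set
\[
T:=\bigl\{(2^{a_1},\ldots,2^{a_n}):a_i\in\{0,K,2K,\ldots\}\cap\{0,\ldots,w-1\},\ \min_i a_i=0\bigr\}.
\]
For two distinct tuples in $T$ with exponent vectors $\bar a\neq \bar a'$, the normalization $\min_i a_i=0$ rules out $\bar a-\bar a'$ being a constant vector, so its entries (all multiples of $K$) take at least two distinct values; hence $\max_i(a_i-a'_i)-\min_i(a_i-a'_i)\geq K$, giving $\max_i(x_i/x'_i)/\min_i(x_i/x'_i)\geq 2^K>\delta^2$ and contradicting the separation criterion. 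Thus $T$ is a valid packing.

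Finally I would estimate $|T|$. With $M:=\lfloor(w-1)/K\rfloor+1$ admissible values per coordinate, direct counting gives $|T|=M^n-(M-1)^n\geq M^{n-1}$ (using $(M-1)^n\leq(M-1)M^{n-1}$), hence $\log|T|\geq(n-1)\log M$. Substituting $M\geq w/K$ and $K\leq\log_2(4(1+2\veps))+1$ recovers the leading terms $n\log w-n\log(4(1+2\veps))$. The hard part will be absorbing the elementary rounding losses (the $(n-1)$ vs $n$ gap in the logarithm, integer rounding of $K$, and the floor in $M$) into the generous slack $\tfrac{w}{2}\log(e^2 n)$; under the paper's assumption $w=o(n)$ this slack dominates those losses, and the stated bound will follow via Stirling-type estimates on the normalization count.
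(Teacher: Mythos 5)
Your proof is correct and takes a genuinely different route from the paper's. The paper uses a covering/volume argument: it builds a structured universe $U$ (tuples whose $n$ coordinates are partitioned into $w$ blocks according to the position of the leading $1$-bit), upper-bounds the number of $\veps$-close tuples around each $\bar x\in U$ in terms of $S_{\max}/S$ and $\prod_i x_i$, lower-bounds $|U|$ via a multinomial coefficient and Stirling's approximation, and divides. You instead use a packing argument: a set $T$ of pure power-of-two tuples pairwise separated in ratio by $2^K$, together with a composition lemma showing that any single probability vector $\veps$-close to two input tuples forces their ratio spread to be at most $\delta^2=\bigl((1+\veps)/(1-\veps)\bigr)^2$. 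This buys you an argument that avoids Stirling and the block bookkeeping entirely, and your final slack check is actually more generous than you suggest: $\mathcal{S}\geq(n-1)\bigl(\log w-\log(4(1+2\veps))\bigr)$ already implies the stated bound because $\tfrac{w}{2}\log(e^2 n)\geq\tfrac{w}{2}\log(e^2)>w\geq\log w\geq\log w-\log\bigl(4(1+2\veps)\bigr)$ for all $w,n\geq 1$, so the assumption $w=o(n)$ plays no role here.

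The one genuine issue is your claim that $2^K>\delta^2$ holds ``in the relevant range of $\veps$.'' With $K=\lceil\log_2(4(1+2\veps))\rceil$, the sufficient condition $4(1+2\veps)>\delta^2$ reduces to $3-2\veps-13\veps^2+8\veps^3>0$, which fails once $\veps$ exceeds roughly $0.47$; already at $\veps=1/2$ you get $K=3$ but $\delta^2=9>8=2^K$. The lemma as stated carries no restriction on $\veps$, and the paper's covering argument works for all $\veps\in(0,1)$. This is not a problem for the paper's downstream use of the lemma (which needs only $\veps\leq 1/8$), but you should make the restriction explicit: for $\veps$ near $1$ the packing approach inherently requires $K\approx 2\log\tfrac{1}{1-\veps}$, which exceeds $\log(4(1+2\veps))$ and so cannot recover the stated constant.
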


\noindent
{\bf Comparing upper bounds with lower bounds} The upper bound that we obtained on the space requirement was $(n \log{\frac{2}{\eps}} + n \log{w})$.
We break the comparison into the following two parts:
\begin{enumerate}
\item $\frac{1}{\eps} > w$: In this case, the upper bound is $O(n \log{\frac{1}{\eps}})$. Using Lemma~\ref{lemma:1}, we get that the lower bound is $\Omega(\log{\frac{1}{\eps}})$ assuming $w = o(n)$ and $\eps \leq 1/2$.

\item $\frac{1}{\eps} \leq w$: In this case, the upper bound is $O(n \log{w})$. Using Lemma~\ref{lemma:2}, we get that the lower bound is $\Omega(n \log{w})$ assuming $w = o(n)$ and $\eps \leq 1/8$.
\end{enumerate}
So, we obtain matching lower and upper bounds assuming $w = o(n)$ and $\eps \leq 1/8$.

\subsection{Approximate sampling: Additive model}\label{sec:additive}
Now we consider the case of additive error. Given a parameter $\veps$, we would like to sample element $i$ with probability lying in the
range $[p_i - \veps, p_i + \veps]$.
We first discuss the upper bound by giving our sampling algorithm and then give matching lower bounds. Again, assume wlog that $\frac{1}{\veps}$ is an integer. 
Let $S$ denote $\sum_j x_j$.

\vspace{0.1in}

\noindent
{\bf Upper bound}: We maintain a sorted array $A$ of size $\frac{1}{\veps}$ which stores copies of 
 numbers from 1 to $n$. For each $i$,
it stores either $\lfloor \frac{1}{\veps} \cdot \frac{x_i}{S}  \rfloor$ or $\left \lfloor \frac{1}{\veps} \cdot \frac{x_i}{S} \right \rfloor + 1$ copies of $i$. Note that this can be done because
$$ \sum_i  \left \lfloor \frac{1}{\veps} \cdot \frac{x_i}{S} \right \rfloor \leq \sum_i \frac{1}{\veps} \cdot \frac{x_i}{S} = \frac{1}{\veps}
\leq \sum_i  \left(\left \lfloor \frac{1}{\veps} \cdot \frac{x_i}{S} \right \rfloor + 1 \right) . $$
To generate a random element, the algorithm picks a uniformly random location in $A$ and outputs the number stored in that location in $A$. 
Clearly, the probability of sampling $i$ lies in the range 
$$\left[ \veps \left \lfloor \frac{1}{\veps} \cdot \frac{x_i}{S} \right \rfloor,\veps \left(\left \lfloor \frac{1}{\veps} \cdot \frac{x_i}{S} \right \rfloor+1 \right)\right] \subseteq \left[ \veps \left( \frac{1}{\veps} \cdot \frac{x_i}{S}  -1 \right),  \veps \left( \frac{1}{\veps} \cdot \frac{x_i}{S}  +1 \right) \right] = \left[p_i - \veps, p_i + \veps\right],$$
which is what we need. 
Clearly, the total space needed is the space to store $A$, i.e., $O\left(\frac{1}{\eps} \log{n}\right)$. 

\vspace{0.1in}

\noindent
{\bf Lower bound} We now prove the lower bound result. We come up with a set of distributions such that for each pair of them, 
they differ by more than $\veps$ on at least one coordinate. Consider the following set of $n$-tuples: $(x_1 \cdot \veps, x_2 \cdot \veps, \ldots, x_n \cdot \veps)$ where $x_1, ..., x_n$ are non-negative integers such that $\sum_i x_i = \frac{1}{\veps}$. If we pick any two such
distinct vectors, they will differ on at least one coordinate by at least $\veps$. 
Clearly, the size of the set of such possible vectors (or distributions)
is at least $\binom{\frac{1}{\veps} + n - 1}{n}$. 
Therefore, the space needed for sampling with $\veps$ additive error is at least
$\log \binom{\frac{1}{\veps} + n - 1}{n} \geq \Omega \left( \frac{1}{\veps} \log{n}\right)$, 
provided $\veps$ is some constant independent of $n$. 
Lower bound for smaller $\veps$ is discussed in the Appendix.
Matching these lower bounds for small $\veps$ is left as an open problem.

\section*{Acknowledgement}
RJ and AK would like to thank Karl Bringmann for discussions on Succinct Sampling.

\bibliographystyle{alpha}
\bibliography{biblio}

\begin{thebibliography}{BDM02}

\bibitem[BDM02]{bdm02}
Brian Babcock, Mayur Datar, and Rajeev Motwani.
\newblock Sampling from a moving window over streaming data.
\newblock In {\em Proceedings of the Thirteenth Annual ACM-SIAM Symposium on
  Discrete Algorithms}, SODA '02, pages 633--634, Philadelphia, PA, USA, 2002.
  Society for Industrial and Applied Mathematics.

\bibitem[BL13]{bl}
Karl Bringmann and Kasper~Green Larsen.
\newblock Succinct sampling from discrete distributions.
\newblock In {\em Proceedings of the Forty-fifth Annual ACM Symposium on Theory
  of Computing}, STOC '13, pages 775--782, New York, NY, USA, 2013. ACM.

\bibitem[BP12]{bp2012}
Karl Bringmann and Konstantinos Panagiotou.
\newblock Efficient sampling methods for discrete distributions.
\newblock In Artur Czumaj, Kurt Mehlhorn, Andrew Pitts, and Roger Wattenhofer,
  editors, {\em Automata, Languages, and Programming}, volume 7391 of {\em
  Lecture Notes in Computer Science}, pages 133--144. Springer Berlin
  Heidelberg, 2012.

\bibitem[ES06]{es06}
Pavlos~S. Efraimidis and Paul~G. Spirakis.
\newblock Weighted random sampling with a reservoir.
\newblock {\em Information Processing Letters}, 97(5):181 -- 185, 2006.

\bibitem[JKS13]{jks12}
Ragesh Jaiswal, Amit Kumar, and Sandeep Sen.
\newblock A simple ${D}^2$-sampling based {PTAS} for k-means and other
  clustering problems.
\newblock {\em Algorithmica}, 2013.

\bibitem[Knu81]{knuth}
D~E Knuth.
\newblock {\em The Art of Computer Programming}, volume~2.
\newblock Addison-Wesley, 1981.

\bibitem[KP79]{kp1979}
Richard~A. Kronmal and Jr. Peterson, Arthur~V.
\newblock On the alias method for generating random variables from a discrete
  distribution.
\newblock {\em The American Statistician}, 33(4):pp. 214--218, 1979.

\bibitem[Li94]{li}
Kim-Hung Li.
\newblock Reservoir-sampling algorithms of time complexity $o( n (1 + \log{N /
  n}))$.
\newblock {\em ACM Trans. Math. Software}, 20(4):481 -- 493, 1994.

\bibitem[POS07]{park}
Byung-Hoon Park, George Ostrouchov, and Nagiza~F. Samatova.
\newblock Sampling streaming data with replacement.
\newblock {\em Computational Statistics and Data Analysis}, 52(2):750 -- 762,
  2007.

\bibitem[Vit84]{vitter84}
J~S Vitter.
\newblock Faster methods for random sampling.
\newblock {\em Comm. ACM}, 27(7):703 -- 718, 1984.

\bibitem[Vit85]{vitter85}
J~S Vitter.
\newblock Random sampling with a reservoir.
\newblock {\em ACM Trans. Math. Software}, 11(1):37 -- 57, 1985.

\bibitem[Wal74]{walker}
A.J. Walker.
\newblock New fast method for generating discrete random numbers with arbitrary
  frequency distributions.
\newblock {\em Electronics Letters}, 10(8):127--128, April 1974.

\end{thebibliography}

\appendix

\section{Proof of Lemma~\ref{lem:offline}}

\begin{proof}
Let $r$ be the smallest integer such that $2^r \geq n/\veps$, and let $k$ denote $\left \lfloor \frac{2^r}{n} \right \rfloor$.
Now we consider a sequence $x$ of $r$ random bits, and interpret this as a number between 0 and $2^r-1$. 
If this number is at least $nk$, we output $\bot$. 
Otherwise $x$ is less than $nk$. 
Let $i$ be the (unique) integer between $1$ and $n$ such that $x \in [(i-1)k, ik)$. 
In this case, the algorithm outputs the object $O_i$. 
Clearly, the probability that the algorithm outputs $O_i$ is $\frac{k}{2^r}$, which is the same for all the $n$ objects.
The probability that it outputs $\bot$ is 
$$\frac{2^r - nk}{2^r} \leq \frac{2^r - n \left( \frac{2^r}{n}-1\right)}{2^r} = \frac{n}{2^r} \leq \veps.$$
Since $r$ is $O(\log \frac{n}{\veps})$, we have shown the first part of the lemma.

\noindent
Now we prove the lower bound result. Let $R$ denote the minimum number of bits required.
Clearly, $2^R \geq n$, because there are at least $n$ possible outcomes. 
Assuming there is one sequence of random bits for which the algorithm outputs $\bot$ (recall that for a general $n$, this will be the case), we get $\veps \geq \frac{1}{2^R}$,
which implies $2^R \geq \frac{1}{\veps}$. 
Thus, $R \geq \frac{1}{2} \log \frac{n}{\veps}$.
\qed
\end{proof}

\section{Pseudocode for our Sampling Algorithms in the Streaming Setting}

Given below are the pseudocodes for the \emph{doubling} and \emph{chopping} algorithms. 
For each time $t$, we first call ${\tt Double}(t)$ and then call ${\tt Chop}(t)$. 
We maintain the variables $h,h_\bot,s,l$ and $rank$ as global variables. 
After calling ${\tt Double}(t)$ and ${\tt Chop}(t)$ for any $t$, the following properties will be satisfied:
\begin{enumerate}
\item $h=|H_1^t|=|H_2^t|=\dots =|H_t^t|$,

\item $h_\bot=|H_\bot^t|$,

\item $s=2^{r_t}$,

\item  $l$ is equal to the location of the current random string (also the index of the current stored item),

\item$rank$ is equal to the position of the current random string.
\end{enumerate}
 
 \noindent
We assume that the function ${\tt random}(y)$ returns a random integer between $1$ and $y$ and the function ${\tt random\_bit}()$ returns a random bit. 
{\tt random}$(y)$ may be easily implemented using {\tt random\_bit}$()$ when $y$ is a power of $2$ which is indeed the case below.

\begin{figure}[ht]
\begin{boxedminipage}{0.44\linewidth}

{\bf Double$(t)$}

\hspace{0.1in} $1.$ If $(t=1)$

\hspace{0.3in} $\bullet$ Set $s \leftarrow 2^{\left \lceil \log \frac{4}{\epsilon} \right \rceil}$

\hspace{0.3in} $\bullet$ Set $h \leftarrow s$

\hspace{0.3in} $\bullet$ Set $h_{\bot} \leftarrow 0$

\hspace{0.3in} $\bullet$ Set $l \leftarrow 1$

\hspace{0.3in} $\bullet$ Set rank $\leftarrow$ {\tt random}$(1,s)$

\hspace{0.1in} $2.$ While $(s<\frac{(t+1)^2}{\epsilon})$

\hspace{0.3in} $\bullet$ Set $b \leftarrow$  {\tt random$\_$bit}()

\hspace{0.3in} $\bullet$ If $(b=1)$ and $(l=\bot)$

\hspace{0.6in} $\diamond$ Set rank $\leftarrow \textrm{rank} + h_{\bot}$

\hspace{0.3in} $\bullet$ If $(b=1)$ and $(l\neq \bot)$

\hspace{0.6in} $\diamond$ Set rank $\leftarrow  \textrm{rank} + h$

\hspace{0.3in} $\bullet$ Set $h \leftarrow 2 h$

\hspace{0.3in} $\bullet$ Set $h_{\bot}\leftarrow 2 h_{\bot}$

\hspace{0.3in} $\bullet$ Set $s \leftarrow 2 s$

\vspace{0.55in}
\end{boxedminipage}
\quad 
\begin{boxedminipage}{0.5\linewidth}
{\bf Chop$(t)$}

\hspace{0.1in} $1.$ If $(t=1)$

\hspace{0.3in} $\bullet$ Exit

\hspace{0.1in} $2.$ Set $h^{\prime} \leftarrow \left \lfloor \frac{s}{t} \right \rfloor$

\hspace{0.1in} $3.$ Set $h_t \leftarrow (t-1)(h-h^{\prime})$

\hspace{0.1in} $4.$ If $(l \neq \bot)$ and $(\textrm{rank}>h^{\prime})$

\hspace{0.3in} $\bullet$ Set rank $\leftarrow (l-1)(h-h^{\prime})+(\textrm{rank}-h^{\prime})$

\hspace{0.3in} $\bullet$ Set $l \leftarrow t$

\hspace{0.1in} $5.$ If $(h_t > h^{\prime})$

\hspace{0.3in} $\bullet$ Set $h_{{\bot}^{\prime}} \leftarrow h_{\bot}+h_t-h^{\prime}$

\hspace{0.3in} $\bullet$ If $(l=t)$ and $(\textrm{rank} >h^{\prime})$

\hspace{0.5in} $\diamond$ Set $l \leftarrow \bot$

\hspace{0.5in} $\diamond$ Set rank $\leftarrow h_{\bot}+\textrm{rank}-h^{\prime}$

\hspace{0.1in} $6.$ If $(h_t \leq h^{\prime})$

\hspace{0.3in} $\bullet$ Set $h_{{\bot}^{\prime}} \leftarrow h_{\bot}+h_t-h^{\prime}$

\hspace{0.3in} $\bullet$ If $(l=\bot)$ and $(\textrm{rank} >h_{{\bot}^{\prime}})$

\hspace{0.5in} $\diamond$ Set $l \leftarrow t$

\hspace{0.5in} $\diamond$ Set rank $\leftarrow h_t+\textrm{rank} -h_{{\bot}^{\prime}}$

\hspace{0.1in} $7.$ Set $h \leftarrow h^{\prime}$

\hspace{0.1in} $8.$ Set $h_{\bot} \leftarrow h_{{\bot}^{\prime}}$
\end{boxedminipage}
\caption{The doubling and chopping methods}
\label{fig:chop}
\end{figure}

\noindent
Users interested in the implementation of our algorithms may find a python code at the following link useful:
\href{http://www.cse.iitd.ac.in/~rjaiswal/Research/Sampling/sampling.py}{http://www.cse.iitd.ac.in/~rjaiswal/Research/Sampling/sampling.py}.

\section{Proof of Lemma~\ref{lemma:1}}
\newcommand{\Ball}{{\tt Ball}}

\begin{proof}
Let $U \subseteq [\bits^w]^n$ denote a universe of $n$ tuples of $w$-bit numbers such that for any $(u_1, ..., u_n) \in U, \sum_i u_i = T$, where $T$ will be specified later. %% \in [n \cdot 2^w]$ the value of which will be specified later.
Given a tuple $\bar{x} = (x_1, \ldots , x_n) \in U$, $\Ball(\bar{x})$ denotes the set of  all tuples $\bar{y} \in U$ such that $\bar{y}$ is $\veps$-close to $\bar{x}$. Recall that this implies that for all $i = 1, \ldots, n$,
$$ (1-\veps) x_i \leq y_i \leq (1+\veps) x_i. $$
%Consider a tuple $\bar{y} = (y_1, ..., y_n) \in Ball(\bar{x})$. We have
%\[
%\left \lfloor \frac{x_i}{1+\eps} \right \rfloor \leq  y_i \leq \left \lceil \frac{x_i}{1-\eps} \right \rceil
%\]
So, $y_i$ can  have at most $2 \veps x_i$ different values around $x_i$. This gives the following:
\begin{equation}\label{eqn:4}
|\Ball(\bar{x})| \leq (2\veps)^n \cdot x_1 \cdot x_2 \cdots x_n  \leq 2^{wn} \cdot (2 \veps)^n,
\end{equation}
where the last inequality follows from the fact that each $x_i$ is a $w$-bit number.
%%The last inequality uses the inequality $x_1\cdot ... \cdot x_n \leq \left(\frac{x_1 + ... + x_n}{n} \right)^n$.
For any tuple $\bar{t} \in [\bits^w]^n$, the sum of elements in the tuple belongs to the set $\{0, 1, \ldots , n\cdot (2^w -1)\}$. This means that there is one value $T' \in \{0, 1, \ldots , n\cdot (2^w -1)\}$ such that the number of tuples whose sum is equal to $T'$ is at least $\frac{2^{nw}}{n \cdot (2^w-1) +1} \geq \frac{2^{nw}}{n \cdot 2^w}$. We will use $T = T'$. This implies that $|U| \geq \frac{2^{nw}}{n \cdot 2^w}$. Combining this fact with
 inequality~(\ref{eqn:4}), we get
\[
2^\mathcal{S} \geq \frac{|U|}{2^{wn} \cdot (2 \veps)^n }
\geq \frac{1}{n \cdot 2^w \cdot (2\veps)^n}
\]
This gives $\mathcal{S} \geq n \log \frac{1}{\veps} - w - n - \log n$.
\qed
\end{proof}

\section{Proof of Lemma~\ref{lemma:2}}

\begin{proof}
Let $U \subseteq ([\bits^w])^n$ denote the subset of $n$-tuples of $w$-bit numbers $\bx$ of the following form: it should be possible to
divide the $n$ coordinates in $\bx$ into $w$ blocks, each block consisting of  $n/w$ coordinates (note that these coordinates
need not be consecutive).
Let $B_l$ denote the set of indices corresponding to block $l$.
For any index $i \in B_l$, the first $(l-1)$ bits are 0, and the $l^{th}$ bit is 1 (the remaining bits can be
arbitrary).
Consider any tuple $\bx = (x_1, \ldots , x_n) \in U$. Define $\Ball(\bx)$ as the set of all tuples $\by \in U$ which
are $\veps$-close to $\bx$, i.e., for all $i = 1, \ldots, n$,
\begin{equation}\label{eqn:5}
(1 - \veps) \cdot \frac{x_i}{S} \leq \frac{y_i}{S'} \leq (1 + \veps) \cdot \frac{x_i}{S},
\end{equation}
where $S = \sum_j x_j$ and $S' = \sum_j y_j$.
Let $S_{\min} = \min_{\bar{y} \in \Ball(\bar{x})} \sum_i y_i$ and $S_{\max} = \max_{\bar{y} \in \Ball(\bar{x})} \sum_i y_i$.
Then for all $i$ we have
\[ \frac{S_{\min}}{S} \cdot (1-\veps) \cdot x_i \leq y_i \leq \frac{S_{\max}}{S} \cdot (1+\veps)\cdot x_i.
\]
Therefore, number of possible values of $y_i$ is upper bounded by
\[
\frac{x_i}{S} \cdot \left( (S_{\max} - S_{\min}) + \veps (S_{\max} + S_{\min}) \right) \leq (1+2\veps) \cdot \frac{x_i \cdot S_{\max}}{S}.
\]
Using this, we get that
\begin{equation}\label{eqn:6}
|\Ball(\bar{x})| \leq \left(\frac{S_{\max}}{S} \right)^n \cdot (x_1 \cdots  x_n) \cdot \left( 1 + 2 \veps \right)^n
\end{equation}
We will now try to get an upper bound on $|\Ball(\bar{x})|$ by obtaining suitable bounds on the quantities on the RHS of the above inequality.
First, note that due to the nature of the tuples under consideration, we have:
\[
S \geq \sum_{i=1}^{w} \frac{n}{w} \cdot 2^{w - i} = \frac{n}{w} \cdot (1 + 2 + ... + 2^{w-1}) = \frac{n}{w} \cdot (2^w - 1).
\]
Furthermore, for any $\bar{y} \in U$, we have
\[
S_{\max} \leq \sum_{i=1}^{w} \frac{n}{w} (2^{w-i+1} -1) = \frac{n}{w} (2^{w+1} - 2 - w) \leq 2 \cdot \frac{n}{w} \cdot (2^{w}-1)
\]
Next we upper bound the product of $x_1, \ldots, x_n$.
Since, each number in $i^{th}$ group is $< 2^{w - i+1}$, we can write,
\[
x_1 \cdots  x_n < \prod_{i=1}^{w} (2^{w-i+1})^{n/w} = 2^{n(w+1)/2}
\]
Putting these bounds in inequality (\ref{eqn:6}), we get that
\[
|\Ball(\bar{x})| \leq 2^n \cdot 2^{n(w+1)/2} \cdot \left( 1 + 2 \veps \right)^n
\]
Now, we try to get an estimate on $|U|$. The number of ways $w$ blocks can be arranged is $\frac{n!}{(\frac{n}{w}!)^{w}}.$
We now use the following Stirling's approximation of $a!$ for any positive integer $a$:
$$\sqrt{2\pi}  a^{a+1/2} e^{-a} \leq a! \leq e a^{a+1/2} e^{-a},$$
to get $$ \frac{n!}{(\frac{n}{w}!)^{w}} 
 \geq \sqrt{2\pi} \left( \frac{w}{e^2 n}\right)^{\frac{w}{2}} n^{1/2} w^n.$$ 
So, we get
\begin{eqnarray*}
|U| \geq \left( \sqrt{2\pi} \left( \frac{w}{e^2 n}\right)^{\frac{w}{2}} n^{1/2} w^n \right) \cdot \prod_{i=1}^{w} (2^{w - i})^{\frac{n}{w}} 
=  \left( \sqrt{2\pi} \left( \frac{w}{e^2 n}\right)^{\frac{w}{2}} n^{1/2} w^n \right) \cdot 2^{n(w-1)/2}
\end{eqnarray*}
Using this bound, we have
\begin{eqnarray*}
2^{\mathcal{S}} &\geq& \frac{|U|}{2^n \cdot 2^{n(w+1)/2} \cdot \left( 1 + 2 \veps \right)^n} \\
&\geq& \frac{\left( \sqrt{2\pi} \left( \frac{w}{e^2 n}\right)^{\frac{w}{2}} n^{1/2} w^n \right) \cdot 2^{n(w-1)/2}}{2^n \cdot 2^{n(w+1)/2} \cdot \left(1 + 2 \eps\right)^n} \\
&=& \frac{1}{2^{2n}} \cdot \sqrt{2 \pi n} \cdot w^n \cdot \left(\frac{w}{e^2 n}\right)^{w/2} \cdot \left(\frac{1}{1+2\veps} \right)^n 
\end{eqnarray*}
which implies that 
\begin{eqnarray*}
{\cal S} &\geq& n \log{w} + \log{\sqrt{2 \pi n}} + \frac{w}{2} \log{\frac{w}{e^2 n}} + n \log{\frac{1}{4 (1 + 2 \veps)}} \\
&\geq& n \log w - n \log{4(1+2 \veps)} -\frac{w}{2} \log{(e^2 n)} 
\end{eqnarray*}
%%\Rightarrow \mathcal{S} &\geq& n \log{w} + \log{\sqrt{2 \pi n}} + \frac{w}{2} \log{\frac{w}{e^2 n}} + n \log{\frac{1}{4 {1 + 2 \eps}} \\
%%\Rightarrow \mathcal{S} &\geq& n \log{w} - n \log{\frac{4 (1 + 2 \eps)}{1-\eps^2}} - \frac{w}{2} \log{(e^2 n)}.
%%\end{eqnarray*}
This concludes the proof of the lemma.
\qed
\end{proof}

\section{Additive model: Lower bounds}
As shown in Section~\ref{sec:additive}, the lower bound on space is given by the expression:
\begin{eqnarray*}
\mathcal{S} \geq \log{\binom{1/\veps + n - 1}{n}} = \Omega \left( \frac{1}{\veps} \cdot \log{(1 + \veps n)} + n \log{\left(1 + \frac{1}{\veps n} \right)}\right).
\end{eqnarray*}
So, we get the following lower bounds in the following two cases:
\begin{enumerate}
\item $\veps \geq 1/n$: In this case, we get that $\mathcal{S} = \Omega \left(\frac{1}{\veps} \cdot \log{\veps n}\right)$.

\item $\veps < 1/n$: In this case, we get that $\mathcal{S} = \Omega \left(n \cdot \log{\frac{1}{\veps n}}\right)$.
\end{enumerate}

%\noindent
%Using simple optimisations to our techniques in Section~\ref{sec:additive}, we can an upper bound of $O(\frac{1}{\veps} + n)$.
%Note that this is better than our bound of $O(\frac{1}{\veps} \cdot \log{n})$ given in Section~\ref{sec:additive} when $\veps = \Omega \left(\frac{\log{n}}{n} \right)$.
%This new upper bound follows from the simple observation that

\end{document}